\algnewcommand\algorithmicswitch{\textbf{switch}}
\algnewcommand\algorithmiccase{\textbf{case}}
\algnewcommand\algorithmicassert{\texttt{assert}}
\algnewcommand\Assert[1]{\State \algorithmicassert(#1)}%
\newcommand{\K}{{\sf K}}
\newcommand{\B}{{\sf B}}
\renewcommand{\phi}{\varphi}
\newcommand{\counterfactual}{\ensuremath{%
  \mathrel{\scriptstyle\square\kern-1.5pt\raise0.2pt\hbox{$\mathord{\rightarrow}$}}}}
\renewenvironment{proof}{\begin{trivlist}\item\noindent{\sc Proof.}}{\hfill$\Box\hspace{2mm}$\end{trivlist}}
\newenvironment{proof-of-claim}{\noindent{\sc Proof of Claim.}}{\hfill $\Box\hspace{2mm}$\linebreak}
\begin{document}

\title{Dynamic Logic of Trust-Based Beliefs}

\author{Junli Jiang\inst{1} \and
Pavel Naumov\inst{2} \and
Wenxuan Zhang\inst{3}}
\authorrunning{Jiang, Naumov, and Zhang}
%
\institute{Institute of Logic and Intelligence, Southwest University, China \\
\email{walk08@swu.edu.cn}
\and University of Southampton, United Kingdom\\
\email{p.naumov@soton.ac.uk}
\and Independent Scholar, United States\\
\email{wenxzhang.work@gmail.com}\\
}

\maketitle

\begin{abstract}

Traditionally, an agent's beliefs would come from what the agent can see, hear, or sense. In the modern world, beliefs are often based on the data available to the agents. In this work, we investigate a dynamic logic of such beliefs that incorporates public announcements of data. The main technical contribution is a sound and complete axiomatisation of the interplay between data-informed beliefs and data announcement modalities. We also describe a non-trivial polynomial model checking algorithm for this logical system.
\end{abstract}





\section{Introduction}


\begin{dialogue}
\speak{The Associated Press} { Breaking: Two Explosions in the White House and Barack Obama is injured.} 
\end{dialogue}

This was a tweet by the leading news agency at 1:07 pm on April 23, 2013. An hour after the tweet, the hackers from the Syrian Electronic Army claimed responsibility for the tweet~\cite{f13wp}. In this paper, we represent this Associate Press (AP) tweet as a data variable $t$ of type {string}, whose values could differ between epistemic worlds. 
Let us consider the moment before the AP tweet became public. Anyone with the knowledge of the tweet's content at that moment, like the Syrian hackers, would know that the tweet is about two explosions. We denote this by:
$$
\K_t(\mbox{``the AP tweet is about two explosions''}).
$$
In general, we consider arbitrary {\em sets} of data variables. We call them {\em datasets}. We allow data variables to be of an arbitrary type. Informally, $\K_X\phi$ means that anyone who knows dataset $X$ would know $\phi$. More formally, $\K_X\phi$ means that statement $\phi$ is true in all worlds in which all data variables in dataset $X$ have the same values as in the current world. We call $\K_X$ the {\em data-informed knowledge} modality. This modality for sets of Boolean variables is introduced in~\cite{gls15jair}. For arbitrary data variables, it is proposed in~\cite{bv21jpl}. The term ``data-informed knowledge'' is coined in~\cite{jn22ai}. 

Van Eijck, Gattinger, and Wang~\cite{vgw17icla} propose modality $[X]\phi$ that stands for ``statement $\phi$ is true after set $X$ of data variables is publicly announced''. In our example,
$$
[t]\K_\varnothing(\mbox{``the AP tweet is about two explosions''})
$$
because once the value of $t$ is publicly announced, no additional information is needed to know that the tweet is about two explosions. 

Our Twitter story does not end here. Just {\em two minutes} after the tweet, S\&P 500 declined 0.95\%, wiping \$136.5 billion in stock value~\cite{f13wp}. Joseph Greco, formerly of Meridian Equity Partners, placed the blame on computerised trading algorithms that monitor news sites and trigger trades based on predetermined rules~\cite{l18mw}. In other words, a {\em public announcement} on the {\em trusted} AP Twitter account made the automated trading agents to form a {\em belief} that the US economy is about to decline. Based on this belief, the agents started the sell-off.

\begin{figure}[ht]
\centering
\vspace{0mm}
\scalebox{0.35}{\includegraphics{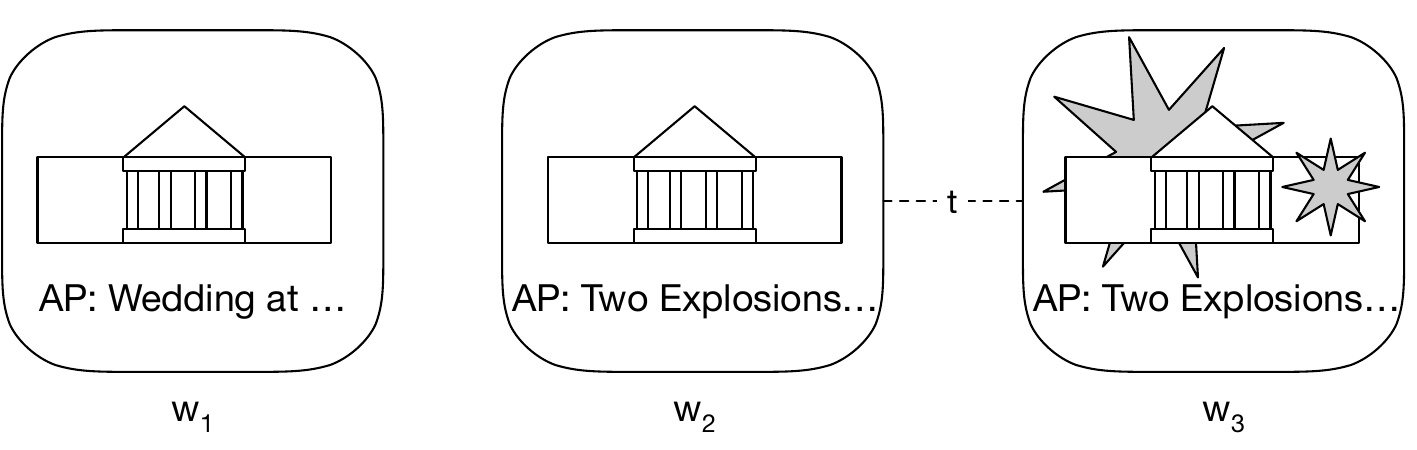}}
\caption{\footnotesize Trustworthiness model. Data variable $t$ is trustworthy in worlds $w_1$ and $w_3$ and is not trustworthy in world $w_2$.}\label{white house figure}
\vspace{-2mm}
\end{figure}
We previously proposed a formal logical system for reasoning about such {\em trust-based beliefs}~\cite{jn24synthese-trust}.
The semantics of that logic is defined using trustworthiness models.
Such a model for our example is depicted in Figure~\ref{white house figure}. This model has three possible worlds: $w_1$, $w_2$, and $w_3$. In the first world, the White House is safe, and the Associated Press is tweeting about a wedding. In the second world, the White House is also safe, but the AP tweet incorrectly reports two explosions. In world $w_3$, the tweet about two explosions is correct. The actual world is $w_2$.

A trustworthiness model specifies for each possible world which data variables are {\em trustworthy} in that world. Informally, a data variable is trustworthy in a world if it reflects the state of affairs in this world. In our example, data variable $t$ is trustworthy in worlds $w_1$ and $w_3$ but not in world $w_2$. After the tweet becomes public, an agent can distinguish worlds $w_1$ and $w_2$, but it still cannot distinguish the current world $w_2$ from possible world $w_3$. 

Note that a {\em trustworthy} variable is not necessarily {\em trusted}, and a trusted variable is not necessarily trustworthy.
We say that {\em the trust in a data variable informs a belief in a statement if the statement holds in all worlds indistinguishable from the current world in which the data variable is trustworthy}. In our example, in world $w_2$, the trust in $t$ informs the belief that the US economy is about to decline:
\begin{equation}\label{29-may-a}
    [t]\B^t_\varnothing(\mbox{``the US economy is about to decline''}).
\end{equation}
This is because the only world indistinguishable from world $w_2$  is world $w_3$ in which data variable $t$ is trustworthy and the US economy is about to decline.

In the above example, the belief is based only on publicly announced data. In general, following~\cite{jn24synthese-trust}, we write $\B^T_X\phi$ to denote that if dataset $T$ is trusted, then dataset $X$, together with the publicly announced data, informs the belief in statement $\phi$. 

Observe that statement~(\ref{29-may-a}) is true in worlds $w_2$ and $w_3$ but not in world $w_1$. Let us now consider the situation {\em before} the tweet $t$ becomes public. At that moment, an agent (with access only to public data) cannot distinguish any of the worlds $w_1$, $w_2$, and $w_3$. If the agent trusts the AP tweets, then the agent believes in the statements true in all worlds indistinguishable from $w_2$ in which data variable $t$ is trustworthy. That is, the trust in data variable $t$ leads to the belief in all statements that are true in world $w_1$ and $w_3$. As noted above, statement~(\ref{29-may-a}) is not true in world $w_1$. Thus, in the current world~$w_2$,

$$
\neg \B^t_\varnothing[t]\B^t_\varnothing(\mbox{``the US economy is about to decline''}).
$$
In other words, before the tweet goes public, the agent trusting the AP (but not knowing yet the content of the tweet $t$) does not believe that after the tweet goes public, it will expect the economy to decline. At the same time, anyone who knows the content of tweet $t$ before it becomes public (like, for example, the Syrian hackers), would consider only $w_2$ and $w_3$ as possible worlds. Note that statement~(\ref{29-may-a}) is true in both of these worlds. Thus,

$$
\K_t[t]\B^t_\varnothing(\mbox{``the US economy is about to decline''}).
$$
Informally, anyone who knows $t$ would also know that, after the tweet goes public, people who trust the AP will believe that the US economy is about to decline.

\section{Related Literature}\label{Related Literature section}

Public {\em data} announcement modality $[X]\phi$ has its roots in the Public Announcement Logic (PAL)~\cite[Chapter 4]{dhk07}. PAL extends the language of the epistemic logic with a public {\em statement} announcement modality $[\phi]\psi$ that means ``if truthful statement $\phi$ is publicly announced, then statement $\psi$ will become true''. The AP tweet from the introduction {\em cannot} be treated as a public announcement in the PAL setting because PAL only allows announcements of true statements. 

Multiple extensions of Public Announcement Logic are suggested. 
W{\'a}ng and {\AA}gotnes
 add the distributed knowledge modality to it~\cite{wa13synthese}.
{\AA}gotnes, Balbiani, van Ditmarsch, and Seban,
propose a group announcement modality $\langle G\rangle\phi$ that means ``group $G$ can announce certain facts, individually known to the members of the group, after which statement $\phi$ will be true''~\cite{abvs10jal}. Although modality $\langle G\rangle\phi$ states that $\phi$ will become true after an announcement by group $G$, it does not require $\phi$ to remain true after further announcements are made by agents outside of group $G$. However, such requirement is imposed by modality $\langle[G]\rangle\phi$ introduced in~\cite{ga17tark}.

In \cite{jn24synthese-trust}, we review the existing literature on logics of beliefs and trust and discuss the connection between modality $\B^T_X$ and the previously studied operators capturing trust and beliefs. Besides~\cite{jn24synthese-trust}, the most relevant previous work is~\cite{pgca19lori}, which proposes a logical system that describes the interplay between trust, beliefs, and public group announcements. In that paper, trust is semantically modelled through set $T^w_a$ of all agents whom agent $a$ trusts in state $w$ and beliefs are defined using belief bases. As public announcements are made, agent $a$ updates the set $T^w_a$ of agents that it trusts based on its belief base. 
Thus, in~\cite{pgca19lori}, beliefs define trust, while in \cite{jn24synthese-trust} and the current paper, trust defines beliefs. The syntax of the system in~\cite{pgca19lori} includes trust atomic proposition $T_{a,b}$ (agent $a$ trusts agent $b$), belief modality $B_a\phi$ (agent $a$ beliefs in statement $\phi$), and group public announcement. They consider only announcements of beliefs expressed in a restricted language, which is {\em not} closed with respect to Boolean connectives. Unlike us, they do not consider data-informed beliefs and announcements of values. They suggest an axiomatisation of their system but do not provide a proof of its completeness.

\section{Contribution and Outline}


In this paper, we study the interplay between modalities $\K_X$, $\B^T_X$, and $[X]$. The connection between modalities $\K_X$ and $\B^T_X$ is straightforward: $\K_X\phi\equiv\B^\varnothing_X\phi$. The connection between modalities $\B^T_X$ and $[X]$ is much less trivial, and this connection is the main focus of the current work. To describe this connection, we propose a non-trivial Commutativity axiom:  $[Y]\B^T_X\phi \leftrightarrow \B^T_{Y\cup X}[Y]\phi$ and show that all properties of the interplay between these two modalities follow from this axiom and studied earlier individual properties of modalities $\B^T_X$ and $[X]$.

To give the formal semantics for the language containing modalities $\B^T_X$ and $[X]$, we propose to define satisfaction as a ternary relation $w,U\Vdash\phi$ between a possible world, a dataset, and a formula. Neither of the papers~\cite{vgw17icla,jn22ijcai-trust} defines semantics through such a relation. This approach also forced us to state the truth lemma in a new form, using formula $[U]\phi$ rather than formula $\phi$, as done in both of the works mentioned above. Similarly, the statement and the proof of Lemma~\ref{child exists} are modified to use $[U]\B^T_X\phi\notin F(w)$ instead of $\B^T_X\phi\notin F(w)$. 

Finally, we give a non-trivial model checking algorithm for our language. Works~\cite{vgw17icla,jn22ijcai-trust} do not discuss model checking. We are also not aware of any follow-up works discussing model checking for the logical system in~\cite{vgw17icla}.

The rest of the paper is structured as follows. In the next section, we define the class of models that we later use to give the semantics of our logical system. In Section~\ref{s and s section}, we give its syntax and formal semantics. In Section~\ref{Belief Revision section}, we discuss whether a public announcement of data can change a belief to the opposite. We list the axioms and the inference rules of our system and prove their soundness in Section~\ref{Axioms section}. 
Section~\ref{Completeness section} proves the completeness of our system. In Section~\ref{Model Checking section}, we propose a polynomial time model checking algorithm for our logical system. Section~\ref{Conclusion section} concludes.

\section{Trustworthiness Model}\label{Trustworthiness Model section}

In this section, we formally define trustworthiness models. We fix the set of data variables $V$ and the set of atomic propositions.

In the introduction, we assumed that data variables have values. For example, tweet $t$ had string values such as ``Two explosions ...''. However, as we will see in Definition~\ref{sat}, the specific values of variables are not important for the semantics of the modality $\B^T_X$. It is only important whether the values of all variables in dataset $X$ in two different epistemic worlds are the same or not.
Thus, to keep our presentation succinct, we use equivalence relation $\sim_x$ for each data variable $x\in V$ as the primitive component of a trustworthiness model in the formal definition below. Informally, $w_1\sim_x w_2$ if variable $x$ has the same value in worlds $w_1$ and $w_2$.

\begin{definition}\label{model}
A tuple $(W,\{\sim_x\}_{x\in V},\{\mathcal{T}_w\}_{w\in W},\pi)$ is called a {trustworthiness model} if 
\begin{enumerate}
    \item $W$ is a (possibly empty) set of worlds,
    \item $\sim_x$ is an ``indistinguishability'' equivalence relation on set $W$ for each data variable $x\in V$,
    \item $\mathcal{T}_w\subseteq V$ is a set of data variables that are ``trustworthy'' in world $w\in W$,  
    \item $\pi(p)\subseteq W\times \mathcal{P}(V)$ for each atomic proposition $p$.
\end{enumerate}
\end{definition}
In our introductory example, $V=\{t\}$, set $W$ is the set $\{w_1,w_2,w_3\}$, relation $\sim_t$ is the reflexive and symmetric closure of the relation $\{(w_2,w_3)\}$, $\mathcal{T}_{w_1}=\mathcal{T}_{w_3}=\{t\}$, and $\mathcal{T}_{w_2}=\varnothing$.

In the standard account of Public Announcement Logic (PAL), it is assumed that the validity of each atomic proposition only depends on the world and does not depend on the announcement made in this world so far~\cite[ch. 4]{dhk07}. For example, atomic proposition $p$ in that system can not represent the statement ``formula $\phi$ has already been publicly announced''. As a result, PAL contains the Atomic Permanence axiom $[\phi]p\leftrightarrow (\phi \to p)$ that captures the fact that the validity of an atomic proposition cannot be changed by an announcement. Because the principle $[\phi]\psi\leftrightarrow (\phi\to \psi)$ is not valid for an arbitrary formula $\psi$, the substitution inference rule is not admissible in PAL.

This paper takes a more general approach under which an atomic proposition, just like any other formula in our language, represents a statement about the current world and all public announcements made in this world so far. For example, an atomic proposition can represent the statement ``the AP tweeted about two explosions that did not happen''. 
Item 4 of Definition~\ref{model} captures this by defining set $\pi(p)$ to be a set of all pairs $(w,U)$ such that atomic proposition $p$ is true in state $w\in W$ when the set of all publicly announced variables is exactly $U$.

\section{Syntax and Semantics}\label{s and s section}

The language $\Phi$ of our system is defined by the grammar:
$$
\phi::=p\;|\;\neg\phi\;|\;\phi\to\phi\;|\;\B^T_X\phi\;|\;[X]\phi,
$$
where $p$ is an atomic proposition and $X,T\subseteq V$ are datasets.
We read formula $\B^T_X\phi$ as ``if dataset $T$ is trusted, then dataset $X$ informs the belief $\phi$'' and formula $[X]\phi$ as ``statement $\phi$ is true after a public announcement of dataset $X$''.  Connective $\leftrightarrow$ and constant $\bot$ are defined through $\neg$ and $\to$ in the usual way.

Next, we define the semantics of our logical system. In most of the works on modal logic, including~\cite{jn24synthese-trust}, the semantics of a modal logic is defined through a binary satisfaction relation $w\Vdash\phi$ between a world $w\in W$ and a formula $\phi\in \Phi$. Informally, it means that formula $\phi$ is true in world $w$. To account for the public announcements of data variables, in this paper, we define satisfaction as a ternary relation $w,U\Vdash\phi$ between a world $w\in W$, a dataset $U\subseteq V$, and a formula $\phi\in\Phi$.
Informally, $w,U\Vdash\phi$ means that statement $\phi$ is true in world $w$ when the set of publicly announced variables is exactly $U$. Throughout the paper, we write $w\sim_X u$ if $w\sim_x u$ for each data variable $x\in X$.

\begin{definition}\label{sat}
For any world $w\in W$ of any trustworthiness model $(W,\{\sim_x\}_{x\in V},\{\mathcal{T}_w\}_{w\in W},\pi)$, any dataset $U\subseteq V$, and any formula $\phi\in \Phi$, the satisfaction relation $w,U\Vdash\phi$ is defined as follows:
\begin{enumerate}
    \item $w,U\Vdash p$ if $(w,U)\in \pi(p)$,
    \item $w,U\Vdash\neg\phi$ if $w,U\nVdash\phi$,
    \item $w,U\Vdash \phi\to\psi$ if $w,U\nVdash \phi$ or $w,U\Vdash \psi$,
    \item $w,U\Vdash \B^T_X\phi$ if $w',U\Vdash \phi$ for each world $w'\in W$ such that $w\sim_{X\cup U} w'$ and $T\subseteq \mathcal{T}_{w'}$,
    \item $w,U\Vdash [X]\phi$ if $w,U\cup X\Vdash \phi$.
\end{enumerate}
\end{definition}
Note that, in item~4 of the above definition, we use relation $\sim_{X\cup U}$ because dataset $U$ is publicly announced and, thus, the values of variables in this dataset are available while inferring belief $\phi$ from the dataset $X$. 

In the special case when set $T$ is empty, the modality $\B^\varnothing_X\phi$ means that formula $\phi$ is true in all worlds indistinguishable from the current world by the dataset $X$ and the set  $U$ of all publicly announced variables. In the introduction, we denoted this modality by $\K_X\phi$. It is easy to see that it satisfies all standard S5 properties.

Finally, note that in this paper we consider public announcements of {\em values} of data variables. One might also consider public announcements of {\em trustworthiness} of data variables. We leave this type of public announcements for future research.

\section{Belief Revision}\label{Belief Revision section}

Under the semantics proposed in the previous section, it is possible that $w,U\Vdash \B^T_X\bot$. Indeed, this is true if there is no world $u\in W$ such that $w\sim_X u$ and dataset $T$ is trustworthy in world $u$. In such a situation, the set of trusted variables $T$ needs to be changed just as it is done with beliefs in the traditional belief revision literature~\cite[ch. 3]{dhk07}. The exact procedure of how this should be done in real-world applications is important, but it is outside of the scope of our work. By introducing a trust parameter into the belief modality, we are able to separate the question of what should be trusted from what should be believed based on the existing trust. 

In PAL, only true statements can be announced, but it is a well-known observation that once a true statement is announced it might become false~\cite[ch. 4]{dhk07}. In our system, variables do not change values. Thus, their public announcements do not affect their values. One naturally can ask if a public announcement of variables can affect data-informed beliefs:

\noindent
{\bf Question:} {\em is it possible for formulae $\B^T_X\phi$ and $[X]\neg\B^T_X\phi$ 
to be true at the same time? What about formulae $\B^T_X\phi$ and $[X]\B^T_X\neg\phi$?}

The answer to both parts of this question is \textit{yes}. Indeed, assume that the language contains just a single atomic proposition $p$ and a single data variable $x$, and consider a trustworthiness model with a single world $w$. The choice of set $\mathcal{T}_{w} $ is not important. Suppose that $\pi(p)=\{(w,\varnothing)\}$. In other words, atomic proposition $p$ represents the statement ``nothing has been announced''. First, observe that
$w,\varnothing\Vdash p$ by item~1 of Definition~\ref{sat}. Thus, $w,\varnothing\Vdash \B^\varnothing_{x} p$ by item~4 of Definition~\ref{sat} because $w$ is the only world in this model. Next, observe that $w,\{x\}\nVdash p$ by item~1 of Definition~\ref{sat}. Thus, $w,\{x\}\nVdash \B^\varnothing_{x} p$ by item~4 of Definition~\ref{sat}. Hence, $w,\varnothing\Vdash [x]\neg\B^\varnothing_{x} p$ by items~2 and 5 of Definition~\ref{sat}. This answers the first part of the question. To answer the second, one can similarly show that $w,\varnothing\Vdash [x]\B^\varnothing_{x} \neg p$.

A PAL-traditionalist might find the above answer unsatisfactory because it exploits the fact that the truth value of atomic proposition $p$ depends not only on the world, but also on the announcements made in this world. As we discussed in Section~\ref{Trustworthiness Model section}, such atomic propositions are not allowed in the standard version of PAL. For such readers, we have another, a bit more complicated, example that uses a ``permanent'' atomic proposition. In this example, we assume the language contains a single atomic proposition $p$ and two data variables: $x$ and $y$. The trustworthiness model for this example is depicted in Figure~\ref{belief revision figure}. 
\begin{figure}[ht]
\begin{center}
\vspace{-2mm}
\scalebox{0.5}{\includegraphics{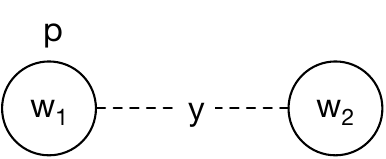}}
\caption{\footnotesize Trustworthiness model.}\label{belief revision figure}
\vspace{-4mm}
\end{center}
\vspace{0mm}
\end{figure}

It has two worlds, $w_1$ and $w_2$, distinguishable by variable $x$ but indistinguishable by variable $y$. The choice of sets $\mathcal{T}_{w_1} $ and $\mathcal{T}_{w_2}$ is not important.
Atomic proposition $p$ is (permanently) true in world $w_1$ and is (permanently) false in world $w_2$. It is easy to see that
$w_1,\varnothing\Vdash \B^\varnothing_x\neg\B^\varnothing_y p$;
$w_1,\varnothing\Vdash [x]\neg\B^\varnothing_x\neg\B^\varnothing_yp$;
and
$w_1,\varnothing\Vdash [x]\B^\varnothing_x\neg\neg\B^\varnothing_y p$.
This answers the question (take $\phi$ to be the statement $\neg\B^\varnothing_y p$).

\section{Axioms}\label{Axioms section}

In addition to propositional tautologies in language $\Phi$, our {\em Dynamic Logic of Trust-Based Beliefs} contains the axioms listed below. 

\begin{enumerate}
     \item Truth: $\B^\varnothing_X\phi\to\phi$,
    \item Distributivity:\\
    $\B^T_X(\phi\to\psi)\to(\B^T_X\phi\to\B^T_X\psi)$,\\
    $[X](\phi\to\psi)\to([X]\phi\to[X]\psi)$,
    \item Negative Introspection of Beliefs: $\neg\B^T_X\phi\to\B^\varnothing_X\neg\B^T_X\phi$,
     \item Monotonicity: $B^T_X\phi\to\B^{T'}_{X'}\phi$,
       where $T\subseteq T'$, $X\subseteq X'$,
    \item Trust: $\B_X^T(\B_Y^T\phi\to\phi)$,
    \item Combination: $[X][Y]\phi\leftrightarrow [X\cup Y]\phi$,
    
    \item Commutativity: $[Y]\B^T_X\phi \leftrightarrow \B^T_{Y\cup X}[Y]\phi$,
    
    \item Duality: $\neg[X]\phi\leftrightarrow[X]\neg\phi$,
 
    \item Empty Announcement: $[\varnothing]\phi\leftrightarrow \phi$.
\end{enumerate}

We write $\vdash\phi$ and say that formula $\phi$ is a {\em theorem} if $\phi$ is provable from the above axioms using the Modus Ponens and the Necessitation inference rules:
$$
\dfrac{\phi, \phi\to\psi}{\psi}
\hspace{15mm}
\dfrac{\phi}{\B^T_X\phi}
\hspace{15mm}
\dfrac{\phi}{[X]\phi}.
$$
In addition to the unary relation $\vdash\phi$, we also consider a binary relation $F\vdash\phi$. We write $F\vdash\phi$ if formula $\phi$ is derivable from the {\em theorems} of our logical system and the set of additional assumptions $F$ using the Modus Ponens inference rule only. Note that statement $\varnothing\vdash\phi$ is equivalent to $\vdash\phi$. We say that a set of formulae $F$ is {\em inconsistent} if $F\vdash\phi$ and $F\vdash\neg\phi$ for some formula $\phi\in\Phi$.

\begin{theorem}[strong soundness]\label{strong soundness}
For any dataset $U\subseteq V$, any world $w$ of a trustworthiness model, any set of formulae $F\subseteq \Phi$, and any formula $\phi\in\Phi$, if $w,U\Vdash f$ for each formula $f\in F$ and $F\vdash \phi$, then $w,U\Vdash \phi$.
\end{theorem}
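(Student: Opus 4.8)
The plan is to prove strong soundness in two stages. First I would establish \emph{weak soundness}: every axiom is valid (true at every world $w$ under every announced dataset $U$ in every trustworthiness model) and both inference rules preserve this validity. Then I would lift this to the strong statement about $F \vdash \phi$ by an induction on the length of the derivation, observing that $F \vdash \phi$ uses only Modus Ponens from theorems together with the assumptions in $F$.

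For the weak-soundness stage, I would fix an arbitrary model, world $w$, and dataset $U$, and verify each of the nine axiom schemes against Definition~\ref{sat}. Most are routine unfoldings: Truth follows because $\sim_{X\cup U}$ is reflexive and $\B^\varnothing_X$ imposes no trustworthiness constraint, so $w$ itself is among the quantified worlds; the two Distributivity schemes and Negative Introspection follow from the standard modal bookkeeping, with Negative Introspection using that $\sim_{X\cup U}$ is an equivalence relation (symmetry and transitivity). Monotonicity follows since enlarging $X$ shrinks the set of $\sim_{X\cup U}$-related worlds and enlarging $T$ tightens the constraint $T\subseteq \mathcal{T}_{w'}$, so fewer worlds must satisfy $\phi$. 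For the announcement axioms, Empty Announcement is immediate from item~5 with $X=\varnothing$; Combination and Duality reduce, via item~5, to the set identities $U\cup X\cup Y=U\cup(X\cup Y)$ and to the fact that item~5 commutes with negation. Trust is the one belief-only axiom that needs a short semantic argument: unwinding $\B^T_X(\B^T_Y\phi\to\phi)$ at $w$ means showing that at every $w'$ with $w\sim_{X\cup U}w'$ and $T\subseteq\mathcal{T}_{w'}$, if $w'$ believes $\B^T_Y\phi$ then $\phi$ holds at $w'$; this works because such a $w'$ is itself $\sim_{Y\cup U}$-related to itself and trustworthy for $T$, hence appears in its own $\B^T_Y$ quantification.

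The main obstacle is the \textbf{Commutativity} axiom $[Y]\B^T_X\phi \leftrightarrow \B^T_{Y\cup X}[Y]\phi$, which the authors themselves flag as the nontrivial one. I would prove it by unfolding both sides completely. The left side, via item~5, says $w,U\cup Y\Vdash \B^T_X\phi$, i.e. $w',U\cup Y\Vdash\phi$ for all $w'$ with $w\sim_{X\cup(U\cup Y)}w'$ and $T\subseteq\mathcal{T}_{w'}$. The right side says $w',U\Vdash[Y]\phi$, i.e. $w',U\cup Y\Vdash\phi$, for all $w'$ with $w\sim_{(Y\cup X)\cup U}w'$ and $T\subseteq\mathcal{T}_{w'}$. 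The two formulations coincide exactly because the index set $X\cup U\cup Y$ equals $(Y\cup X)\cup U$, so the quantified families of worlds are identical and the inner satisfaction condition $w',U\cup Y\Vdash\phi$ is literally the same. The delicacy is purely notational: one must track that the announced dataset shifts from $U$ to $U\cup Y$ \emph{and simultaneously} the belief index absorbs $Y$, so that the $\sim$-index stays invariant; I would write out both chains of equivalences carefully to make the matching transparent.

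Finally, for the strong-soundness stage, suppose $w,U\Vdash f$ for every $f\in F$ and $F\vdash\phi$. I would argue by induction on the derivation of $\phi$ from $F$: if $\phi\in F$ it holds by hypothesis; if $\phi$ is a theorem, then weak soundness (with the Necessitation and Modus Ponens cases already handled there) gives $w,U\Vdash\phi$ outright; and if $\phi$ is obtained by Modus Ponens from earlier $\psi$ and $\psi\to\phi$, the induction hypothesis gives $w,U\Vdash\psi$ and $w,U\Vdash\psi\to\phi$, whence $w,U\Vdash\phi$ by item~3 of Definition~\ref{sat}. Since no additional uses of Necessitation appear inside $F\vdash\phi$ (that rule is reserved for deriving theorems), this induction closes immediately once weak soundness is in hand.
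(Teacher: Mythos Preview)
Your proposal is correct and follows essentially the same approach as the paper: the paper also declares the Truth, Distributivity, Monotonicity, Combination, Duality, and Empty Announcement axioms straightforward, then gives detailed semantic proofs for Negative Introspection, Trust, and Commutativity (the latter via contrapositive in both directions, but substantively identical to your unfolding), and leaves the lifting from weak to strong soundness implicit. The only cosmetic difference is that the paper treats Negative Introspection as deserving its own lemma rather than grouping it with ``standard modal bookkeeping,'' but your sketch of that case (using symmetry and transitivity of $\sim_{X\cup U}$) is accurate.
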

The soundness of the
Truth, the Distributivity, the Monotonicity, the Combination, the Duality, and the Empty Announcement axioms is straightforward. 
Below, we prove the soundness of the Negative Introspection of Beliefs, the Trust, and the Commutativity axioms.
\begin{lemma}
If $w, U \nVdash \B^T_X\phi$, then $w, U \Vdash \B^\varnothing_X\neg\B^T_X\phi$.
\end{lemma}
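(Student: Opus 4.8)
The plan is to prove this lemma directly from the semantics in Definition~\ref{sat}, unfolding the two belief modalities and using the fact that the indistinguishability relation $\sim_{X\cup U}$ is an equivalence relation (in particular, symmetric and transitive). The statement we want is exactly the soundness of the Negative Introspection of Beliefs axiom, so the real work is a chase through the semantic clauses for $\B^T_X$ and $\B^\varnothing_X$.

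First I would unfold the hypothesis $w,U\nVdash\B^T_X\phi$. By item~4 of Definition~\ref{sat}, this means there exists a witness world $v\in W$ with $w\sim_{X\cup U} v$, $T\subseteq\mathcal{T}_v$, and $v,U\nVdash\phi$. Next I would unfold the conclusion $w,U\Vdash\B^\varnothing_X\neg\B^T_X\phi$: again by item~4 (with the empty trust set, so the trustworthiness condition $T\subseteq\mathcal{T}_{w'}$ is vacuous), this requires that for every world $w'$ with $w\sim_{X\cup U} w'$ we have $w',U\Vdash\neg\B^T_X\phi$, i.e., by item~2, that $w',U\nVdash\B^T_X\phi$.

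So fix an arbitrary $w'$ with $w\sim_{X\cup U} w'$; I must show $w',U\nVdash\B^T_X\phi$. The natural move is to reuse the same witness $v$ from the hypothesis: it suffices to check that $v$ still witnesses the failure of $\B^T_X\phi$ at $w'$. The conditions $T\subseteq\mathcal{T}_v$ and $v,U\nVdash\phi$ are already in hand from the hypothesis and do not depend on the base world. The only new thing to verify is $w'\sim_{X\cup U} v$. This is where the equivalence-relation structure is essential: from $w\sim_{X\cup U} w'$ and symmetry we get $w'\sim_{X\cup U} w$, and combining with $w\sim_{X\cup U} v$ via transitivity yields $w'\sim_{X\cup U} v$. (Here $\sim_{X\cup U}$ is itself an equivalence relation because it is the intersection of the equivalence relations $\sim_x$ over $x\in X\cup U$, which inherits reflexivity, symmetry, and transitivity.) Then $v$ witnesses $w',U\nVdash\B^T_X\phi$ by item~4, completing the argument.

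The step I expect to be the only point requiring care is this transitivity/symmetry manipulation of $\sim_{X\cup U}$: one must confirm that $\sim_{X\cup U}$, defined componentwise via $w\sim_Y u$ iff $w\sim_x u$ for all $x\in Y$, is genuinely an equivalence relation so that the witness can be transported from $w$ to an arbitrary indistinguishable $w'$. Everything else is a routine unfolding of Definition~\ref{sat}, and the fact that negative introspection uses trust set $\varnothing$ on the outer modality is what makes the argument go through, since it removes any trustworthiness obligation on the intermediate worlds $w'$.
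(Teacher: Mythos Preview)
Your proof is correct and follows essentially the same approach as the paper: unfold the hypothesis to obtain a witness world, then for an arbitrary world $X\cup U$-indistinguishable from $w$ use symmetry and transitivity of $\sim_{X\cup U}$ to show the same witness works there. The only cosmetic differences are that the paper swaps the roles of the variable names $v$ and $w'$ and phrases the final step as a short proof by contradiction rather than directly.
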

\begin{proof}
The assumption $w, U \nVdash \B^T_X\phi$ by item~4 of Definition~\ref{sat} implies that  there is a world $w' \in W$ such that 

\begin{equation}\label{3-Dec-c}
   w\sim_{U \cup X} w',
\end{equation}
\begin{equation}\label{3-Dec-a}
    T\subseteq\mathcal{T}_{w'},
\end{equation}
and
\begin{equation}\label{3-Dec-b}
    w' , U \nVdash\phi.
\end{equation}

Consider any world $v\in W$ such that $w\sim_{U \cup X} v$. By item~4 of Definition~\ref{sat}, it suffices to show that $v, U \nVdash \B^T_X\phi$. Assume the opposite. Then, $v, U \Vdash \B^T_X\phi$. Note that, because $\sim_{U \cup X}$ is an equivalence relation, statement~(\ref{3-Dec-c}) and the assumption $w\sim_{U \cup X} v$ imply that  $v\sim_{U \cup X} w'$ . Therefore, $w', U  \Vdash\phi$ by item~4 of Definition~\ref{sat} and statement~(\ref{3-Dec-a}), which contradicts statement~(\ref{3-Dec-b}).
\end{proof}

\begin{lemma}
$w, U \Vdash \B^T_X(\B^T_Y\phi\to \phi)$.
\end{lemma}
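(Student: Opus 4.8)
The plan is to prove the Trust axiom $\B^T_X(\B^T_Y\phi\to\phi)$ directly from the semantics in Definition~\ref{sat}, item~4. To show $w,U\Vdash\B^T_X(\B^T_Y\phi\to\phi)$, I would unfold item~4 and consider an arbitrary world $w'\in W$ satisfying the two conditions $w\sim_{X\cup U}w'$ and $T\subseteq\mathcal{T}_{w'}$, and then show that $w',U\Vdash\B^T_Y\phi\to\phi$. By item~3 of Definition~\ref{sat}, this reduces to assuming $w',U\Vdash\B^T_Y\phi$ and deriving $w',U\Vdash\phi$.

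The key observation is that the witness world $w'$ is \emph{itself} one of the worlds quantified over in the semantics of $\B^T_Y\phi$ evaluated at $w'$. Concretely, since $\sim_{Y\cup U}$ is an equivalence relation, we have $w'\sim_{Y\cup U}w'$ by reflexivity; and we already have $T\subseteq\mathcal{T}_{w'}$ from the conditions guaranteeing that $w'$ was a legitimate witness in the outer $\B^T_X$. Therefore $w'$ satisfies both requirements ($w'\sim_{Y\cup U}w'$ and $T\subseteq\mathcal{T}_{w'}$) to be one of the worlds over which $w',U\Vdash\B^T_Y\phi$ quantifies. Applying item~4 of Definition~\ref{sat} to the assumption $w',U\Vdash\B^T_Y\phi$ with the specific world $w'$ then immediately yields $w',U\Vdash\phi$, which is exactly what we needed.

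I do not expect any serious obstacle here; the argument is essentially a semantic reflexivity argument. The only point requiring care is making sure the trust condition $T\subseteq\mathcal{T}_{w'}$ is available at the inner step: it is, because it is precisely one of the two conditions that made $w'$ an admissible world in the outer $\B^T_X$ quantification, and the trust set $T$ is the same in both the outer and inner modality. This is the structural reason the Trust axiom carries the \emph{same} superscript $T$ on both belief modalities; the argument would break if the inner and outer trust sets differed. The whole proof is therefore short, pivoting on the reflexivity of $\sim_{Y\cup U}$ together with the reuse of the trust constraint $T\subseteq\mathcal{T}_{w'}$.
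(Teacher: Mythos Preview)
Your proposal is correct and matches the paper's own proof essentially step for step: fix $w'$ with $w\sim_{U\cup X}w'$ and $T\subseteq\mathcal{T}_{w'}$, assume $w',U\Vdash\B^T_Y\phi$, and instantiate item~4 of Definition~\ref{sat} at $w'$ itself using reflexivity of $\sim_{U\cup Y}$ together with the already-available trust condition $T\subseteq\mathcal{T}_{w'}$. Your remark about why the argument requires the same superscript $T$ on both modalities is also on point.
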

\begin{proof}
Consider any world $w'\in W$ such that $w\sim_{U\cup X} w'$ and $T\subseteq\mathcal{T}_{w'}$. By item~4 of Definition~\ref{sat}, it suffices to show that $w', U \Vdash \B^T_Y\phi\to \phi$. Suppose that $w', U \Vdash \B^T_Y\phi$. By item~3 of Definition~\ref{sat}, it is enough to prove that $w', U \Vdash\phi$. 

Note that $w'\sim_{U \cup Y} w'$ because relation $\sim_{U \cup Y}$ is reflexive. Also, $T\subseteq\mathcal{T}_{w'}$ by the choice of world $w'$. Then, the assumption $w', U \Vdash \B^T_Y\phi$ implies that $w', U \Vdash\phi$ by item~4 of Definition~\ref{sat}.
\end{proof}

\begin{lemma}
$w,U\Vdash [Y]\B^T_X\phi$ iff $w,U\Vdash \B^T_{X\cup Y}[Y]\phi$.
\end{lemma}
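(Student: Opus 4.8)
The plan is to unfold both sides of the biconditional directly through Definition~\ref{sat} and to observe that they reduce to one and the same condition. First I would expand the left-hand side. By item~5 of Definition~\ref{sat}, the statement $w,U\Vdash [Y]\B^T_X\phi$ holds iff $w,U\cup Y\Vdash\B^T_X\phi$; applying item~4 to this, it holds iff $w',U\cup Y\Vdash\phi$ for every world $w'\in W$ such that $w\sim_{X\cup(U\cup Y)}w'$ and $T\subseteq\mathcal{T}_{w'}$.

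Next I would expand the right-hand side in the opposite order. By item~4 of Definition~\ref{sat}, the statement $w,U\Vdash \B^T_{X\cup Y}[Y]\phi$ holds iff $w',U\Vdash [Y]\phi$ for every world $w'\in W$ such that $w\sim_{(X\cup Y)\cup U}w'$ and $T\subseteq\mathcal{T}_{w'}$; then applying item~5 to each such $w'$, this is equivalent to $w',U\cup Y\Vdash\phi$ under the same quantification over worlds.

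The crux is simply the set-theoretic identity $X\cup(U\cup Y)=(X\cup Y)\cup U$, which makes the indistinguishability condition $\sim_{X\cup U\cup Y}$ appearing on the two sides literally the same relation. The trustworthiness requirement $T\subseteq\mathcal{T}_{w'}$ and the final demand $w',U\cup Y\Vdash\phi$ are already identical on both sides. Hence the two unfoldings coincide verbatim, and the biconditional follows.

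I do not expect a genuine obstacle here: the entire content of the axiom is that the publicly announced set and the belief index combine by union, so that announcing $Y$ before forming the trust-based belief over $X$ has exactly the same effect as forming the belief over $X\cup Y$ and then announcing $Y$. The only point requiring care is bookkeeping: on the left side the announcement clause (item~5) augments the announced set of the base world $w$ \emph{before} the belief modality selects worlds, whereas on the right side it augments the announced set of each selected world $w'$ \emph{after} selection. One must therefore apply item~5 to $w$ on the left but to each $w'$ on the right, and then verify that the resulting conditions agree by the union identity above.
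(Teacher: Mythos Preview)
Your proof is correct and takes essentially the same approach as the paper: both arguments unfold items~4 and~5 of Definition~\ref{sat} and rely on the set identity $X\cup(U\cup Y)=(X\cup Y)\cup U$. The only cosmetic difference is that the paper proves each implication by contraposition (picking a witnessing world $v$), whereas you present the same unfolding as a direct chain of biconditionals; the logical content is identical.
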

\begin{proof}
$(\Rightarrow):$ Assume that $w,U\nVdash \B^T_{X\cup Y}[Y]\phi$.
Thus, by item~4 of Definition~\ref{sat}, there exists a world $v\in W$ such that $w\sim_{U\cup X\cup Y}v$ , $T\subseteq\mathcal{T}_{v}$, and $v,U\nVdash [Y]\phi$.
Hence, by item~5 of Definition~\ref{sat}, it follows that $v,U\cup Y\nVdash \phi$.
Then, item~4 of Definition~\ref{sat} implies that $w,U\cup Y\nVdash \B^T_X\phi$.
Therefore, $w,U\nVdash [Y]\B^T_X\phi$ again by item~5 of Definition~\ref{sat}.

\vspace{1mm}
\noindent $(\Leftarrow):$ Suppose $w,U\nVdash [Y]\B^T_X\phi$. 
Thus, it follows that $w,U\cup Y\nVdash \B^T_X\phi$ by item~5 of Definition~\ref{sat}.
Hence, by item~4 of Definition~\ref{sat}, there exists a world $v\in W$ such that $w\sim_{U\cup Y\cup X}v$ , $T\subseteq\mathcal{T}_{v}$, and $v,U\cup Y\nVdash \phi$. 
Then, $v,U\nVdash [Y]\phi$ by item~5 of Definition~\ref{sat}.
Thus, by item~4 of Definition~\ref{sat} implies that $w,U\nVdash \B^T_{X\cup Y}[Y]\phi$.
\end{proof}

We conclude this section with two lemmas used in the proof of completeness.

\begin{lemma}\label{positive introspection lemma}
$\vdash \B^T_X\phi\to\B^\varnothing_X\B^T_X\phi$. 
\end{lemma}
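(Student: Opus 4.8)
The plan is to derive this positive-introspection principle purely propositionally from the Negative Introspection of Beliefs axiom together with the fact that the knowledge modality $\B^\varnothing_X$ enjoys the standard S5 properties. Every fact about $\B^\varnothing_X$ that I need arises as a $T=\varnothing$ instance of an axiom already on the list: Truth gives $\B^\varnothing_X\psi\to\psi$, Negative Introspection gives $\neg\B^\varnothing_X\psi\to\B^\varnothing_X\neg\B^\varnothing_X\psi$, and Distributivity together with Necessitation makes $\B^\varnothing_X$ a normal modality. Throughout, I treat $\B^T_X\phi$ as a fixed formula and reason about the dual $\neg\B^\varnothing_X\neg(\cdot)$.

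First I would take the contrapositive of the Negative Introspection axiom $\neg\B^T_X\phi\to\B^\varnothing_X\neg\B^T_X\phi$, obtaining $\neg\B^\varnothing_X\neg\B^T_X\phi\to\B^T_X\phi$; this is the ``$5$-in-diamond'' form that lets me collapse an inner $\neg\B^\varnothing_X\neg$ block. Since this is a theorem, Necessitation for $\B^\varnothing_X$ and the Distributivity axiom yield $\B^\varnothing_X\neg\B^\varnothing_X\neg\B^T_X\phi\to\B^\varnothing_X\B^T_X\phi$.

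Next I would establish the Brouwerian step $\B^T_X\phi\to\B^\varnothing_X\neg\B^\varnothing_X\neg\B^T_X\phi$. The dual of Truth (the contrapositive of the $T=\varnothing$, $\chi:=\neg\B^T_X\phi$ instance $\B^\varnothing_X\neg\B^T_X\phi\to\neg\B^T_X\phi$) gives $\B^T_X\phi\to\neg\B^\varnothing_X\neg\B^T_X\phi$, and the $T=\varnothing$ instance of Negative Introspection applied to $\neg\B^T_X\phi$ gives $\neg\B^\varnothing_X\neg\B^T_X\phi\to\B^\varnothing_X\neg\B^\varnothing_X\neg\B^T_X\phi$; chaining these two implications produces the desired formula. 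Finally, chaining this Brouwerian implication with the implication from the previous paragraph gives $\B^T_X\phi\to\B^\varnothing_X\B^T_X\phi$, as required.

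I expect the only real subtlety to be bookkeeping: recognizing that the contrapositive of Negative Introspection is precisely the ingredient that, after Necessitation, eliminates the inner $\neg\B^\varnothing_X\neg$, and verifying that each auxiliary fact about $\B^\varnothing_X$ is a legitimate $T=\varnothing$ instance of an existing axiom rather than an appeal to the semantics. No new model-theoretic reasoning is needed, so the soundness theorem (Theorem~\ref{strong soundness}) is not invoked; the derivation is entirely syntactic.
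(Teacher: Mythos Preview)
Your proposal is correct and follows essentially the same argument as the paper: both derive the Brouwerian step $\B^T_X\phi\to\B^\varnothing_X\neg\B^\varnothing_X\neg\B^T_X\phi$ from the contrapositive of Truth chained with the $T=\varnothing$ instance of Negative Introspection, and both obtain $\B^\varnothing_X\neg\B^\varnothing_X\neg\B^T_X\phi\to\B^\varnothing_X\B^T_X\phi$ by necessitating the contrapositive of Negative Introspection and applying Distributivity. The only difference is that the paper presents the two blocks in the opposite order.
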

\begin{proof}
Formula $\B^\varnothing_X\neg\B^T_X\phi\to\neg\B^T_X\phi$ is an instance of the Truth axiom. Thus, $\vdash \B^T_X\phi\to\neg\B^\varnothing_X\neg\B^T_X\phi$ by contraposition. Hence, taking into account the following instance $\neg\B^\varnothing_X\neg\B^T_X\phi\to\B^\varnothing_X\neg\B^\varnothing_X\neg\B^T_X\phi$ of  the Negative Introspection axiom,
we have 
\begin{equation}\label{pos intro eq 2}
\vdash \B^T_X\phi\to\B^\varnothing_X\neg\B^\varnothing_X\neg\B^T_X\phi.
\end{equation}

At the same time, formula $\neg\B^T_X\phi\to\B^\varnothing_X\neg\B^T_X\phi$ is also an instance of the Negative Introspection axiom. Thus, by contraposition, $\vdash \neg\B^\varnothing_X\neg\B^T_X\phi\to \B^T_X\phi$. Hence, 
 by the Necessitation inference rule, $\vdash \B^\varnothing_X(\neg\B^\varnothing_X\neg\B^T_X\phi\to \B^T_X\phi)$. Thus, the Distributivity axiom and the Modus Ponens inference rule imply that
$
  \vdash \B^\varnothing_X\neg\B^\varnothing_X\neg\B^T_X\phi\to \B^\varnothing_X\B^T_X\phi.
$
 The latter, together with statement~(\ref{pos intro eq 2}), implies the statement of the lemma by propositional reasoning.
\end{proof}

\begin{lemma}\label{super distributivity}
If $\phi_1,\dots,\phi_n\vdash\psi$, then $\B^T_X\phi_1,\dots,\B^T_X\phi_n\vdash\B^T_X\psi$.
\end{lemma}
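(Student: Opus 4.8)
The plan is to prove the stronger statement that for every formula $\chi$ occurring in a fixed derivation witnessing $\phi_1,\dots,\phi_n\vdash\psi$, we have $\B^T_X\phi_1,\dots,\B^T_X\phi_n\vdash\B^T_X\chi$. The lemma then follows by taking $\chi$ to be $\psi$. Recall that, by the definition of the relation $F\vdash\phi$, the derivation of $\psi$ from $\phi_1,\dots,\phi_n$ uses only the Modus Ponens rule applied to theorems of the logic and to the assumptions $\phi_1,\dots,\phi_n$. I would proceed by induction on the position of $\chi$ in this derivation.

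In the base cases, $\chi$ is either a theorem of the logic or one of the assumptions $\phi_1,\dots,\phi_n$. If $\chi$ is a theorem, then $\vdash\B^T_X\chi$ by the Necessitation inference rule, so $\B^T_X\phi_1,\dots,\B^T_X\phi_n\vdash\B^T_X\chi$ holds trivially. If $\chi$ is one of the assumptions, say $\chi=\phi_i$, then $\B^T_X\chi=\B^T_X\phi_i$ is literally one of the premises on the left of the turnstile, so the statement holds again.

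For the induction step, suppose $\chi$ is obtained by Modus Ponens from two formulae $\alpha$ and $\alpha\to\chi$ occurring earlier in the derivation. By the induction hypothesis, both $\B^T_X\phi_1,\dots,\B^T_X\phi_n\vdash\B^T_X\alpha$ and $\B^T_X\phi_1,\dots,\B^T_X\phi_n\vdash\B^T_X(\alpha\to\chi)$. The Distributivity axiom gives $\vdash\B^T_X(\alpha\to\chi)\to(\B^T_X\alpha\to\B^T_X\chi)$, and two applications of Modus Ponens then yield $\B^T_X\phi_1,\dots,\B^T_X\phi_n\vdash\B^T_X\chi$, completing the induction.

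The only point requiring care, and the reason the induction is needed at all, is that the Necessitation rule may be applied only to theorems and not to arbitrary formulae derivable from the assumptions; hence one cannot simply necessitate $\psi$ directly but must instead push the modality $\B^T_X$ through the derivation one Modus Ponens step at a time. The Distributivity axiom is exactly what makes this propagation possible, and I expect it to be the only nonroutine ingredient in the argument.
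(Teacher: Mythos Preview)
Your proof is correct but follows a different route from the paper's. The paper first invokes the deduction lemma $n$ times to obtain the single theorem $\vdash\phi_1\to(\phi_2\to\dots(\phi_n\to\psi)\dots)$, applies Necessitation once to this theorem, and then peels off the antecedents one by one using Distributivity and Modus Ponens. You instead induct directly on the structure of the given derivation, applying Necessitation only to the theorem leaves and using Distributivity at every Modus Ponens node. Your argument is slightly more elementary in that it does not rely on the deduction lemma as a black box, whereas the paper's version is a bit more streamlined once that lemma is available: a single Necessitation followed by $n$ unpackings, rather than an induction over a derivation of unbounded length. Both are standard ways to establish this kind of closure-under-$\B^T_X$ result, and both hinge on exactly the same two ingredients, Necessitation for theorems and the Distributivity axiom for $\B^T_X$.
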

\begin{proof}
By deduction lemma applied $n$ times, the assumption $\phi_1,\dots,\phi_n\vdash\psi$ implies that
$$
\vdash\phi_1\to(\phi_2\to\dots(\phi_n\to\psi)\dots).
$$
Thus, by the Necessitation inference rule,
$$
\vdash\B^T_X(\phi_1\to(\phi_2\to\dots(\phi_n\to\psi)\dots)).
$$
Hence, by the Distributivity axiom and the Modus Ponens rule,
$$
\vdash\B^T_X\phi_1\to\B^T_X(\phi_2\to\dots(\phi_n\to\psi)\dots).
$$
Then, again by the Modus Ponens rule,
$$
\B^T_X\phi_1\vdash\B^T_X(\phi_2\to\dots(\phi_n\to\psi)\dots).
$$
Therefore, $\B^T_X\phi_1,\dots,\B^T_X\phi_n\vdash\B^T_X\psi$ by applying the previous steps $(n-1)$ more times.
\end{proof}

\section{Completeness}\label{Completeness section}

The proof of completeness is divided into three parts. First, we use tree construction to define the canonical model. Then, we prove the key properties of this model, including the truth lemma. Finally, we use the truth lemma to prove completeness.

\subsection{Canonical Model}

Following~\cite{jn24synthese-trust}, we use the tree construction to define the canonical trustworthiness model $M(T_0,F_0)=(W,\{\sim_x\}_{x\in V},\{\mathcal{T}_w\}_{w\in W},\pi)$ for any dataset $T_0\subseteq V$ and any maximal consistent set of formula $F_0\subseteq \Phi$. The origins of the tree construction can be traced back to the proof of completeness for the distributed knowledge modality~\cite{fhv92jacm}. 

\begin{definition}\label{canonical W}
Set $W$ of worlds  is the set of all sequences
$T_0,F_0,X_1,T_1,F_1,\dots$, $X_n,T_n,F_n$ such that $n\ge 0$ and, for each $i$ where $0\le i\le n$, 
\begin{enumerate}
    \item $X_i,T_i\subseteq V$ are datasets,
  \item $F_i$ is a maximal consistent set of formulae such that 
  \begin{enumerate}
      \item $\psi\in F_i$ for each formula $\B^\varnothing_{X_i}\psi\in F_{i-1}$, if $i>0$,
      \item $\B^{T_i}_Y\phi\to \phi\in F_i$ for each dataset $Y\subseteq V$ and each formula $\phi\in\Phi$.
  \end{enumerate}  
\end{enumerate}
\end{definition}
If $w',w\in W$ are any two worlds such that 
\begin{eqnarray*}
w'&=&T_0,F_0,\dots,X_{n-1},T_{n-1},F_{n-1},\\
w&=&T_0,F_0,\dots,X_{n-1},T_{n-1},F_{n-1},X_n,T_n,F_n,
\end{eqnarray*}
then we say that worlds $w'$ and $w$ are {\em adjacent}. The adjacency relation forms a tree structure on set $W$. We say that the {\em edge} connecting nodes $w'$ with node $w$ is {\em labelled} with all variables in dataset $X_n$. We also say that {node} $w$ is {\em labelled} with the pair $T_n,F_n$. Finally, by $T(w)$ and $F(w)$ we denote sets $T_n$ and $F_n$, respectively.

\begin{figure}[ht]
\begin{center}
\vspace{-2mm}
\scalebox{0.5}{\includegraphics{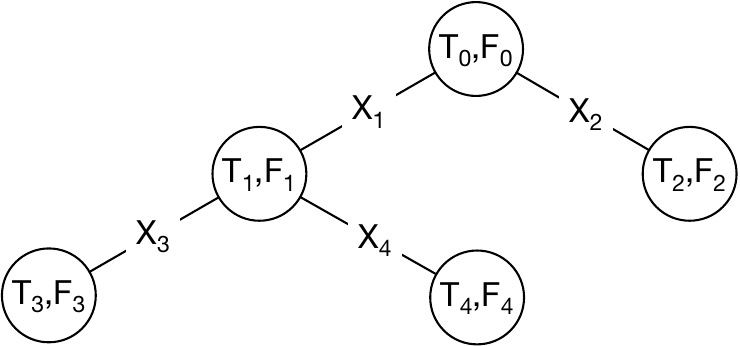}}
\vspace{-1mm}
\caption{\footnotesize Fragment of tree $W$.}\label{canonical tree figure}
\vspace{-5mm}
\end{center}
\vspace{0mm}
\end{figure}
Figure~\ref{canonical tree figure} visualises the tree structure on the set $W$. In this figure, world $w'=T_0,F_0,X_1,T_1,F_1$ is adjacent to the world $w=T_0,F_0,X_1,T_1,F_1,X_4,T_4,F_4$. The edge between nodes $w'$ and $w$ is labelled by all variables in set $X_4$.

\begin{definition}\label{canonical sim}
For any worlds $u,v\in W$ and any data variable $x\in V$, let $u\sim_x v$ if every edge along the unique simple path between node $u$ and node $v$ is labelled with variable~$x$.
\end{definition}
\begin{lemma}
Relation $\sim_x$ is an equivalence relation on set $W$ for each data variable $x\in V$.
\end{lemma}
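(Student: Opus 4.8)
The plan is to verify the three defining properties of an equivalence relation directly from Definition~\ref{canonical sim}, exploiting the fact that the adjacency relation makes $W$ a tree, so that between any two nodes there is a \emph{unique} simple path. Reflexivity and symmetry will be purely formal, and only transitivity requires a genuine argument about the tree structure.

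Reflexivity is immediate: the unique simple path from a node $u$ to itself contains no edges, so the requirement that every edge along the path be labelled with $x$ holds vacuously, giving $u\sim_x u$. Symmetry is equally direct: the simple path from $u$ to $v$ and the simple path from $v$ to $u$ traverse exactly the same set of edges, so if each edge is labelled with $x$ when the path is read in one direction, the same holds in the other; hence $u\sim_x v$ implies $v\sim_x u$.

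The only step requiring real work is transitivity. I would assume $u\sim_x v$ and $v\sim_x w$ and aim to conclude $u\sim_x w$. Here I would invoke the standard tree fact that for any three nodes $u,v,w$ there is a unique node $m$ at which their pairwise simple paths meet, so that the simple path from $u$ to $w$ splits into the segment from $u$ to $m$ followed by the segment from $m$ to $w$. Every edge on the segment from $u$ to $m$ also lies on the simple path from $u$ to $v$ (which passes through $m$), and every edge on the segment from $m$ to $w$ lies on the simple path from $v$ to $w$ (which also passes through $m$). Consequently every edge on the simple path from $u$ to $w$ occurs on the path from $u$ to $v$ or on the path from $v$ to $w$; by the hypotheses $u\sim_x v$ and $v\sim_x w$ each such edge is labelled with $x$, which is exactly the condition for $u\sim_x w$.

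I expect the meeting-node decomposition to be the crux of the argument: once the simple path from $u$ to $w$ is seen to use only edges drawn from the $u$--$v$ and $v$--$w$ paths, transitivity follows mechanically from the labelling hypotheses. The remaining two properties are immediate consequences of the uniqueness of simple paths in a tree, so no further machinery beyond the tree structure established after Definition~\ref{canonical W} is needed.
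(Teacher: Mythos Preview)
Your argument is correct; the paper itself states this lemma without proof, treating it as an immediate consequence of the tree structure on $W$. Your explicit verification of reflexivity, symmetry, and transitivity via the median-node decomposition is exactly the standard way to spell this out, and nothing more is needed.
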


\begin{definition}\label{canonical T}
$\mathcal{T}_w=T(w)$.
\end{definition}

\begin{definition}\label{canonical pi}
For any atomic proposition $p$,
$$\pi(p)=\{(w,U)\in W\times \mathcal{P}(V)\;|\;[U]p\in F(w)\}.$$
\end{definition}


\subsection{Properties of the Canonical Model}


\begin{lemma}\label{single transfer lemma}
For any formula $\B^T_Y\phi\in\Phi$ and any worlds
\begin{eqnarray*}
w'&=&T_0,F_0,\dots,X_{n-1},T_{n-1},F_{n-1},\\
w&=&T_0,F_0,\dots,X_{n-1},T_{n-1},F_{n-1},X_n,T_n,F_n,
\end{eqnarray*}
if $Y\subseteq X_n$, then $\B^T_Y\phi\in F(w')$ \,iff\, $\B^T_Y\phi\in F(w)$.
\end{lemma}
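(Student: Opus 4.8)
The statement relates membership of a belief formula $\B^T_Y\phi$ in the label sets $F(w')$ and $F(w)$ of two adjacent nodes, where the edge between them is labelled by $X_n$ and $Y \subseteq X_n$. The plan is to prove both directions separately, exploiting the defining property~2(a) of Definition~\ref{canonical W}, which transfers formulae under $\B^\varnothing_{X_n}$ from the parent $F_{n-1} = F(w')$ to the child $F_n = F(w)$.

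For the forward direction ($\B^T_Y\phi\in F(w') \Rightarrow \B^T_Y\phi\in F(w)$), I would first use Lemma~\ref{positive introspection lemma}, which gives $\vdash\B^T_Y\phi\to\B^\varnothing_Y\B^T_Y\phi$. Since $F(w')$ is closed under Modus Ponens and contains all theorems, from $\B^T_Y\phi\in F(w')$ I obtain $\B^\varnothing_Y\B^T_Y\phi\in F(w')$. Next I would use the Monotonicity axiom together with $Y\subseteq X_n$ to strengthen the subscript: from $\B^\varnothing_Y\B^T_Y\phi$ derive $\B^\varnothing_{X_n}\B^T_Y\phi$, so that $\B^\varnothing_{X_n}\B^T_Y\phi\in F(w')=F_{n-1}$. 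Now Definition~\ref{canonical W}(2a), with the role of $\psi$ played by $\B^T_Y\phi$, forces $\B^T_Y\phi\in F_n=F(w)$, as desired.

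For the reverse direction ($\B^T_Y\phi\in F(w) \Rightarrow \B^T_Y\phi\in F(w')$), I would argue by contraposition. Suppose $\B^T_Y\phi\notin F(w')$. Since $F(w')$ is maximal consistent, $\neg\B^T_Y\phi\in F(w')$. Applying the Negative Introspection of Beliefs axiom $\neg\B^T_Y\phi\to\B^\varnothing_Y\neg\B^T_Y\phi$ and Modus Ponens, I get $\B^\varnothing_Y\neg\B^T_Y\phi\in F(w')$, and then Monotonicity with $Y\subseteq X_n$ yields $\B^\varnothing_{X_n}\neg\B^T_Y\phi\in F_{n-1}$. Definition~\ref{canonical W}(2a) then transfers $\neg\B^T_Y\phi$ into $F_n=F(w)$, so by consistency of $F(w)$ we conclude $\B^T_Y\phi\notin F(w)$, which is the contrapositive of what we want.

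The two directions are structurally parallel, each combining one introspection-style theorem (positive introspection via Lemma~\ref{positive introspection lemma} for the downward transfer, negative introspection for the upward transfer) with Monotonicity to lift the subscript from $Y$ to $X_n$, followed by an application of the tree-construction clause 2(a). I do not anticipate a serious obstacle; the only point requiring care is correctly instantiating the transfer clause with the compound formula $\B^T_Y\phi$ (respectively $\neg\B^T_Y\phi$) in the position of $\psi$, and making sure the Monotonicity step is applied to the \emph{outer} $\B^\varnothing$ modality rather than the inner belief. The positive-introspection lemma is stated only for the principal superscript on both modalities matching (outer $\varnothing$, inner $T$), which is exactly the shape we produce here, so it applies directly.
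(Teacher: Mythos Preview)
Your proposal is correct and follows essentially the same approach as the paper's proof: both directions use the same combination of positive introspection (Lemma~\ref{positive introspection lemma}) or Negative Introspection, followed by Monotonicity to lift $Y$ to $X_n$, and then Definition~\ref{canonical W}(2a) together with maximality/consistency of the relevant set.
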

\begin{proof} $(\Rightarrow):$ Suppose
$\B^T_Y\phi\in F(w')$. Thus,
$\B^T_Y\phi\in F_{n-1}$. Then, by Lemma~\ref{positive introspection lemma} and the Modus Ponens inference rule, 
$F_{n-1}\vdash \B^\varnothing_Y\B^T_Y\phi$. 
Hence, 
$F_{n-1}\vdash \B^\varnothing_{X_n}\B^T_Y\phi$
by the assumption $Y\subseteq X_n$ of the lemma, the Monotonicity axiom, and the Modus Ponens inference rule.
Then, $\B^\varnothing_{X_n}\B^T_Y\phi\in F_{n-1}$ because $F_{n-1}$ is a maximal consistent set. Thus, $\B^T_Y\phi\in F_{n}$ by item~2(a) of Definition~\ref{canonical W}. Therefore, $\B^T_Y\phi\in F(w)$. 

\vspace{1mm}\noindent$(\Leftarrow):$ 
Suppose
$\B^T_Y\phi\notin F(w')$.
Then, $\B^T_Y\phi\notin F_{n-1}$. Thus, $\neg\B^T_Y\phi\in F_{n-1}$ because $F_{n-1}$ is a maximal consistent set of formulae. Hence, $F_{n-1}\vdash \B^\varnothing_Y\neg\B^T_Y\phi$ by the Negative Introspection axiom and the Modus Ponens inference rule. 
Thus, 
$F_{n-1}\vdash \B^\varnothing_{X_n}\neg\B^T_Y\phi$ by the assumption $Y\subseteq X_n$ of the lemma, the Monotonicity axiom, and the Modus Ponens inference rule.
Then, again because set $F_{n-1}$ is maximal, $\B^\varnothing_{X_n}\neg\B^T_Y\phi\in F_{n-1}$. Thus, $\neg\B^T_{X_n}\phi\in F_{n}$ by item~2(a) of Definition~\ref{canonical W}.
Hence,
$\B^T_Y\phi\notin F_{n}$, because set $F_n$ is consistent. Therefore, $\B^T_Y\phi\notin F(w)$.
\end{proof}

\begin{lemma}\label{child all}
For any worlds $w,u\in W$ and any formula $\B^T_X\phi\in F(w)$, if $w\sim_X u$ and $T\subseteq \mathcal{T}_u$, then $\phi\in F(u)$. 
\end{lemma}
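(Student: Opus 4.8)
The plan is to first establish that $\B^T_X\phi \in F(u)$ by propagating this membership along the unique simple path between $w$ and $u$ in the tree, and then to extract $\phi$ itself from $F(u)$ using the trust-closure built into the definition of canonical worlds.

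By Definition~\ref{canonical sim}, the hypothesis $w \sim_X u$ says that every edge on the unique simple path $w = v_0, v_1, \dots, v_k = u$ in the tree $W$ is labelled with all variables in $X$; equivalently, $X$ is contained in the label of each edge $\{v_{i-1}, v_i\}$. Each such edge joins a parent to an adjacent child, so Lemma~\ref{single transfer lemma}, applied with $Y = X$, yields $\B^T_X\phi \in F(v_{i-1})$ iff $\B^T_X\phi \in F(v_i)$. Crucially, that lemma is an equivalence, so it transfers membership in either direction; this is exactly what is needed, since the simple path first ascends from $w$ to the least common ancestor of $w$ and $u$ and then descends to $u$. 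A straightforward induction on $k$, starting from $\B^T_X\phi \in F(w) = F(v_0)$, then gives $\B^T_X\phi \in F(u)$.

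It remains to pass from $\B^T_X\phi$ to $\phi$ inside $F(u)$, and here I would invoke the trust hypothesis $T \subseteq \mathcal{T}_u = T(u)$ (Definition~\ref{canonical T}). By the Monotonicity axiom we have $\vdash \B^T_X\phi \to \B^{T(u)}_X\phi$, so since $F(u)$ is a maximal consistent set containing $\B^T_X\phi$, it contains $\B^{T(u)}_X\phi$ as well. Item~2(b) of Definition~\ref{canonical W}, instantiated with the dataset $X$, guarantees $\B^{T(u)}_X\phi \to \phi \in F(u)$, and one final application of Modus Ponens within $F(u)$ delivers $\phi \in F(u)$.

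The only genuinely delicate point is the transfer step, and it is handled entirely by Lemma~\ref{single transfer lemma}: what makes it go through is that $w \sim_X u$ forces $X$ to sit inside every edge label along the path, which is precisely the side condition $Y \subseteq X_n$ that the lemma requires; the remainder is bookkeeping. It is worth noting explicitly that the trust dataset $T$ is carried unchanged throughout the transfer — at no intermediate world $v_i$ does one need $T \subseteq T(v_i)$ — and that the hypothesis $T \subseteq T(u)$ is used only at the very end, together with Monotonicity, to raise the trust level to $T(u)$, at which point the built-in Truth-style closure~2(b) becomes applicable.
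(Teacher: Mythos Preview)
Your proposal is correct and follows essentially the same route as the paper: transfer $\B^T_X\phi$ along the tree path via Lemma~\ref{single transfer lemma}, then use Monotonicity to upgrade $T$ to $T(u)$ and invoke item~2(b) of Definition~\ref{canonical W} to extract $\phi$. Your write-up is in fact a bit more explicit than the paper's (naming the path vertices, stressing the bidirectionality of the transfer lemma, and noting that $T$ need not be trustworthy at intermediate nodes), but the argument is the same.
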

\begin{proof}
By Definition~\ref{canonical sim}, the assumption $w\sim_X u$ implies that each edge along the unique simple (without self-intersections) path between nodes $w$ and $u$ is labelled with each variable in dataset $X$. Then, the assumption of the lemma
$\B^T_X\phi\in F(w)$ implies 
$
    \B^T_X\phi\in F(u)
$
by applying Lemma~\ref{single transfer lemma} to each edge along this path. Note that the assumption $T\subseteq \mathcal{T}_u$ of the lemma implies that $T\subseteq T(u)$ by Definition~\ref{canonical T}. Thus, 
$
    F(u) \vdash \B^{T(u)}_X\phi 
$
by the Monotonicity axiom and the Modus Ponens inference rule. 
Hence,
$
F(u) \vdash \phi
$
by item~2(b) of Definition~\ref{canonical W} and the Modus Ponens inference rule. Therefore, $\phi\in F(u)$ because the set $F(u)$ is maximal.
\end{proof}

The next lemma significantly differs from the proof in~\cite{jn24synthese-trust} and other similar proofs in modal logic because it prefixes formulae with $[U]$. Its proof relies on the use of the Commutative axiom, which captures the interplay between beliefs and public announcement modalities.

\begin{lemma}\label{child exists}
For any world $w\in W$ and any formula $[U]\B^T_X\phi\notin F(w)$, there exists a world $w'\in W$ such that $w\sim_{U\cup X} w'$, $T\subseteq\mathcal{T}_{w'}$, and $[U]\phi\notin F(w')$.
\end{lemma}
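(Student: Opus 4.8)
The plan is to first invoke the Commutativity axiom to rewrite the hypothesis in a form to which a standard ``child construction'' applies, and then to build $w'$ as a fresh child of $w$ in the canonical tree, the consistency of the relevant extension being the crux of the argument.

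First I would apply the Commutativity axiom with $Y=U$, which yields the theorem $\vdash [U]\B^T_X\phi \leftrightarrow \B^T_{U\cup X}[U]\phi$. Since $F(w)$ is a maximal consistent set closed under Modus Ponens, the hypothesis $[U]\B^T_X\phi\notin F(w)$ gives $\B^T_{U\cup X}[U]\phi\notin F(w)$. This reduces the lemma to the following task: writing $Z=U\cup X$ and $\psi=[U]\phi$, and given $\B^T_Z\psi\notin F(w)$, produce a world $w'$ with $w\sim_Z w'$, $T\subseteq\mathcal{T}_{w'}$, and $\psi\notin F(w')$.

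Next I would construct $w'$ as the child of $w=T_0,F_0,\dots,X_n,T_n,F_n$ obtained by appending $X_{n+1},T_{n+1},F_{n+1}$ with $X_{n+1}=U\cup X$ and $T_{n+1}=T$. By Definition~\ref{canonical sim}, the single edge between $w$ and $w'$ is then labelled with every variable of $U\cup X$, so $w\sim_{U\cup X}w'$ is immediate; and $\mathcal{T}_{w'}=T(w')=T$ by Definition~\ref{canonical T}, giving $T\subseteq\mathcal{T}_{w'}$. It remains to choose a maximal consistent $F_{n+1}$ that (i) contains $\neg[U]\phi$, (ii) satisfies transfer condition 2(a) of Definition~\ref{canonical W}, i.e.\ contains every $\chi$ with $\B^\varnothing_{U\cup X}\chi\in F(w)$, and (iii) satisfies trust condition 2(b), i.e.\ contains $\B^T_Y\theta\to\theta$ for all $Y,\theta$.

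The main obstacle is showing that the set
$$S=\{\neg[U]\phi\}\cup\{\chi : \B^\varnothing_{U\cup X}\chi\in F(w)\}\cup\{\B^T_Y\theta\to\theta : Y\subseteq V,\ \theta\in\Phi\}$$
is consistent, so that a Lindenbaum extension of $S$ serves as $F_{n+1}$. Suppose for contradiction that $S$ is inconsistent. By the deduction lemma there are finitely many $\chi_1,\dots,\chi_k$ with $\B^\varnothing_{U\cup X}\chi_j\in F(w)$ and finitely many trust formulae $t_1,\dots,t_m$ of the form $\B^T_{Y_i}\theta_i\to\theta_i$ such that $\chi_1,\dots,\chi_k,t_1,\dots,t_m\vdash[U]\phi$. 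Applying Lemma~\ref{super distributivity} with the modality $\B^T_{U\cup X}$ gives $\B^T_{U\cup X}\chi_1,\dots,\B^T_{U\cup X}\chi_k,\B^T_{U\cup X}t_1,\dots,\B^T_{U\cup X}t_m\vdash\B^T_{U\cup X}[U]\phi$. Each $\B^T_{U\cup X}t_i$ is an instance of the Trust axiom, hence a theorem; each $\B^T_{U\cup X}\chi_j$ follows from $\B^\varnothing_{U\cup X}\chi_j\in F(w)$ by the Monotonicity axiom, since $\varnothing\subseteq T$. Thus $F(w)\vdash\B^T_{U\cup X}[U]\phi$, and by maximality $\B^T_{U\cup X}[U]\phi\in F(w)$, contradicting the reduced hypothesis. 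Hence $S$ is consistent, and taking $F_{n+1}$ to be a maximal consistent extension of $S$ makes $w'\in W$ by conditions (ii) and (iii), while $\neg[U]\phi\in F_{n+1}$ together with consistency yields $[U]\phi\notin F(w')$, completing the proof.
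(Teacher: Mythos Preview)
Your proposal is correct and essentially identical to the paper's proof: you build the same set $S$ (the paper calls it $G$), prove its consistency via the same combination of Lemma~\ref{super distributivity}, the Trust axiom, and Monotonicity, and extend it to the new child node with edge label $U\cup X$ and trust label $T$. The only cosmetic difference is that you invoke Commutativity up front to reduce the hypothesis to $\B^T_{U\cup X}[U]\phi\notin F(w)$, whereas the paper invokes it at the end of the contradiction argument to derive $F(w)\vdash[U]\B^T_X\phi$ directly.
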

\begin{proof}
Consider the following set of formulae
\begin{eqnarray}\label{G definition}
G&=&\{\neg[U]\phi\}\cup \{\psi\;|\;\B^\varnothing_{U\cup X}\psi\in F(w)\}\nonumber\\
&&\cup \{\B^T_Y\chi\to \chi\;|\;Y\subseteq V, \chi\in\Phi\}
\end{eqnarray}
\begin{claim}
Set $G$ is consistent.
\end{claim}
\begin{proof-of-claim}
Suppose the opposite. Then, there are formulae $\chi_1,\dots,\chi_n\in\Phi$, datasets $Y_1,\dots,Y_n\subseteq V$, and formulae
\begin{equation}\label{19nov-a}
\B^\varnothing_{U\cup X}\psi_1,\dots,\B^\varnothing_{U\cup X}\psi_m\in F(w)
\end{equation}
such that
$$
\B^T_{Y_1}\chi_1\to \chi_1,\dots,\B^T_{Y_n}\chi_n\to \chi_n,
\psi_1,\dots,\psi_m\vdash [U]\phi. 
$$
Thus, by Lemma~\ref{super distributivity},
\begin{eqnarray*}
&&\hspace{-5mm}\B^T_{U\cup X}(\B^T_{Y_1}\chi_1\to \chi_1),\dots,\B^T_{U\cup X}(\B^T_{Y_n}\chi_n\to \chi_n),\\
&&\hspace{+20mm}\B^T_{U\cup X}\psi_1,\dots,\B^T_{U\cup X}\psi_m\vdash \B^T_{U\cup X}[U]\phi. 
\end{eqnarray*}
Hence, 
$
\B^T_{U\cup X}\psi_1,\dots,\B^T_{U\cup X}\psi_m\vdash \B^T_{U\cup X}[U]\phi
$
by the Trust axiom applied several times. 
Thus, by the Monotonicity axiom and the Modus Ponens inference rule also applied several times,
$
\B^\varnothing_{U\cup X}\psi_1,\dots,\B^\varnothing_{U\cup X}\psi_m\vdash \B^T_{U\cup X}[U]\phi
$.
Hence,
$$
F(w)\vdash \B^T_{U\cup X}[U]\phi
$$
due to statement~(\ref{19nov-a}).
Thus,
$
F(w)\vdash [U]\B^T_X\phi
$
by the Commutativity axiom and propositional reasoning.
Then, $[U]\B^T_X\phi\in F(w)$ because the set $F(w)$ is maximal, which contradicts the assumption of the lemma.
\end{proof-of-claim}

Define $G'$ be any maximal consistent extension of set $G$. 
Assume that $w=T_0,F_0,\dots,X_n,T_n,F_n$.
Consider sequence
\begin{equation}\label{w' choice}
    w'=T_0,F_0,\dots,X_n,T_n,F_n,U\cup X,T,G'.
\end{equation}
Observe that $w'\in W$ by Definition~\ref{canonical W}, equation~(\ref{G definition}), and the choice of set $G'$ as an extension of set $G$. Also, note that $w\sim_{U \cup X} w'$ by Definition~\ref{canonical sim} and equation~(\ref{w' choice}). Finally, $T=T(w')=\mathcal{T}_{w'}$ by equation~(\ref{w' choice}) and Definition~\ref{canonical T}. 

To finish the proof of the lemma, note that $\neg[U]\phi\in G\subseteq G'=F(w')$ by equation~(\ref{G definition}), the choice of $G'$ as an extension of $G$, and equation~(\ref{w' choice}). Then, $[U]\phi\notin F(w')$ because the set $F(w')$ is consistent. 
\end{proof}

The proof of the next ``truth'' lemma is substantially more complicated than the proof of the truth lemma in~\cite{jn24synthese-trust} because of the presence of $[U]$ prefix. 

\begin{lemma}\label{truth lemma}
$w,U\Vdash\phi$ iff $[U]\phi\in F(w)$,
for each world $w\in W$, each dataset $U\subseteq V$, and each formula $\phi\in\Phi$.
\end{lemma}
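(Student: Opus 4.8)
The plan is to prove both directions simultaneously by structural induction on $\phi$, where the statement being established is the biconditional ``$w,U\Vdash\phi$ iff $[U]\phi\in F(w)$'' universally quantified over all worlds $w\in W$ and all datasets $U\subseteq V$. Phrasing the induction hypothesis over \emph{all} worlds and \emph{all} datasets is essential, since the modal cases invoke it at other worlds and at enlarged datasets.

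The atomic case is immediate: $w,U\Vdash p$ iff $(w,U)\in\pi(p)$ by item~1 of Definition~\ref{sat}, which by Definition~\ref{canonical pi} is exactly $[U]p\in F(w)$. For the Boolean cases I would show that, thanks to the Duality and Distributivity axioms (together with Necessitation and the maximality and consistency of $F(w)$), the prefix $[U]$ behaves as a Boolean homomorphism at the level of membership in $F(w)$: from Duality, $[U]\neg\psi\in F(w)$ iff $[U]\psi\notin F(w)$; and from Distributivity together with $[U]$ applied to the relevant tautologies, $[U](\psi\to\chi)\in F(w)$ iff $[U]\psi\notin F(w)$ or $[U]\chi\in F(w)$. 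Combining these facts with the induction hypothesis applied to $\psi$ (and $\chi$) at the same world $w$ and the same dataset $U$ settles the negation and implication cases.

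The announcement case $\phi=[X]\psi$ reduces to the induction hypothesis at an enlarged dataset. By item~5 of Definition~\ref{sat}, $w,U\Vdash[X]\psi$ iff $w,U\cup X\Vdash\psi$, which by the induction hypothesis (applied to the simpler formula $\psi$ at dataset $U\cup X$) is equivalent to $[U\cup X]\psi\in F(w)$. The Combination axiom gives $\vdash[U][X]\psi\leftrightarrow[U\cup X]\psi$, so $[U\cup X]\psi\in F(w)$ iff $[U][X]\psi\in F(w)$, as required.

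The belief case $\phi=\B^T_X\psi$ is the heart of the argument and the step I expect to be the main obstacle. By item~4 of Definition~\ref{sat} and the induction hypothesis applied to $\psi$ at dataset $U$, the relation $w,U\Vdash\B^T_X\psi$ holds iff $[U]\psi\in F(w')$ for every world $w'$ with $w\sim_{U\cup X}w'$ and $T\subseteq\mathcal{T}_{w'}$. The bridge to the syntactic side is the Commutativity axiom, which yields $\vdash[U]\B^T_X\psi\leftrightarrow\B^T_{U\cup X}[U]\psi$, so it suffices to prove that $\B^T_{U\cup X}[U]\psi\in F(w)$ iff $[U]\psi\in F(w')$ for all such $w'$. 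The ($\Rightarrow$) direction follows from Lemma~\ref{child all} applied to the formula $\B^T_{U\cup X}[U]\psi$: for any $w'$ with $w\sim_{U\cup X}w'$ and $T\subseteq\mathcal{T}_{w'}$ it delivers $[U]\psi\in F(w')$. For the contrapositive of ($\Leftarrow$), if $\B^T_{U\cup X}[U]\psi\notin F(w)$, equivalently $[U]\B^T_X\psi\notin F(w)$, then Lemma~\ref{child exists}---stated precisely in the $[U]$-prefixed form for exactly this purpose---produces a witness world $w'$ with $w\sim_{U\cup X}w'$, $T\subseteq\mathcal{T}_{w'}$, and $[U]\psi\notin F(w')$, which by the induction hypothesis gives $w',U\nVdash\psi$ and hence $w,U\nVdash\B^T_X\psi$. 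The delicate point throughout is the interaction between the outer announcement prefix $[U]$ and the accessibility relation $\sim_{U\cup X}$ of the belief modality; it is exactly to make this interaction manageable that the truth lemma is formulated with $[U]\phi$ rather than $\phi$, and that Lemmas~\ref{child all} and~\ref{child exists} are phrased with the $[U]$ prefix and the dataset $U\cup X$.
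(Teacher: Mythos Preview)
Your proposal is correct and follows essentially the same approach as the paper: structural induction on $\phi$ with the atomic case handled by Definition~\ref{canonical pi}, the Boolean cases by Duality/Distributivity (plus Necessitation on the tautologies $\neg\psi\to(\psi\to\chi)$ and $\chi\to(\psi\to\chi)$), the announcement case by the Combination axiom, and the belief case by Commutativity together with Lemma~\ref{child all} for one direction and Lemma~\ref{child exists} for the other. One small inaccuracy in your closing remark: only Lemma~\ref{child exists} is phrased with the $[U]$ prefix; Lemma~\ref{child all} is stated generically and is instantiated in the proof with $X$ replaced by $U\cup X$ and $\phi$ by $[U]\psi$, exactly as you describe in your application.
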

\begin{proof}
We prove the statement by induction on the complexity of formula $\phi$. 

\vspace{1mm}
Suppose that formula $\phi$ is an atomic proposition $p$. Note that $w,U\Vdash p$ iff $(w,U)\in \pi(p)$ by item~1 of Definition~\ref{sat}. At the same time, $(w,U)\in \pi(p)$ iff $[U]p\in F(w)$ by Definition~\ref{canonical pi}. Therefore, $w,U\Vdash p$ iff $[U]p\in F(w)$.

\vspace{1mm}
Suppose that formula $\phi$ has the form $\neg\psi$.

\vspace{1mm}
\noindent$(\Rightarrow):$ 
Assume $w,U\Vdash \neg\psi$. 
Then, $w,U\nVdash \psi$ by item~2 of Definition~\ref{sat}.
Hence, $[U]\psi\notin F(w)$ by the induction hypothesis.
Thus, $\neg[U]\psi\in F(w)$ because set $F(w)$ is maximal.
Then, $F(w)\vdash [U]\neg\psi$ by the Duality axiom and the Modus Ponens inference rule.
Therefore, $[U]\neg\psi\in F(w)$  because set $F(w)$ is maximal.

\vspace{1mm}
\noindent$(\Leftarrow):$ 
Assume $[U]\neg\psi\in F(w)$. 
Thus, $F(w)\vdash \neg [U]\psi$ by the Duality axiom and propositional reasoning.
Then, because set $F(w)$ is consistent, $[U]\psi\notin F(w)$.
Hence, $(w,U)\nVdash \psi$ by the induction hypothesis.
Therefore, $(w,U)\Vdash \neg\psi$ by item~2 of Definition~\ref{sat}.

\vspace{1mm}
Suppose that formula $\phi$ has the form $\psi_1\to\psi_2$.

\vspace{1mm}
\noindent$(\Rightarrow):$
Assume  $w,U\Vdash\psi_1\to\psi_2$.
Thus, either $w,U\nVdash\psi_1$  or $w,U\Vdash\psi_2$ by item~3 of Definition~\ref{sat}. We consider these two cases separately.

\vspace{1mm}
\noindent {\em Case I}: $w,U\nVdash\psi_1$.
Then, $[U]\psi_1\notin F(w)$ by the induction hypothesis.
Hence, $\neg[U]\psi_1\in F(w)$ because set $F(w)$ is maximal.
Thus, by the Duality axiom and propositional reasoning,
\begin{equation}\label{jan25-d}
    F(w)\vdash [U]\neg\psi_1.
\end{equation}
Note that the formula $\neg\psi_1\to(\psi_1\to\psi_2)$ is a propositional tautology.
Then, $\vdash[U](\neg\psi_1\to(\psi_1\to\psi_2))$ by the Necessitation inference rule.
Hence, $\vdash[U]\neg\psi_1\to[U](\psi_1\to\psi_2)$ by the Distributivity axiom and the Modus Ponens inference rule.
Thus, $F(w)\vdash [U](\psi_1\to\psi_2)$ by statement~(\ref{jan25-d}) and the Modus Ponens inference rule.
Then, $[U](\psi_1\to\psi_2)\in F(w)$ because set $F(w)$ is maximal.

\vspace{1mm}
\noindent {\em Case II}: $w,U\Vdash\psi_2$.
Then, by the induction hypothesis,
\begin{equation}\label{jan25-e}
    [U]\psi_2\in F(w).
\end{equation}
Note that the formula $\psi_2\to(\psi_1\to\psi_2)$ is a propositional tautology. Then, $\vdash [U](\psi_2\to(\psi_1\to\psi_2))$ by the Necessitation inference rule. Hence, $\vdash [U]\psi_2\to[U](\psi_1\to\psi_2)$ by the Distributivity axiom and the Modus Ponens inference rule.
Thus, $F(w)\vdash [U](\psi_1\to\psi_2)$ by statement~(\ref{jan25-e}) and the Modus Ponens inference rule.
Therefore, because set $F(w)$ is maximal, $[U](\psi_1\to\psi_2)\in F(w)$.

\vspace{1mm}
\noindent$(\Leftarrow):$ 
Assume $[U](\psi_1\to\psi_2)\in F(w)$.
Then, by the Distributivity axiom and the Modus Ponens inference rule, $F(w)\vdash [U]\psi_1\to[U]\psi_2$.
Hence, $[U]\psi_1\to[U]\psi_2\in F(w)$ because set $F(w)$ is maximal.
Thus, by the Modus Ponens inference rule,
if $[U]\psi_1\in F(w)$, then $F(w)\vdash [U]\psi_2$.
Hence, because set $F(w)$ is maximal, if $[U]\psi_1\in F(w)$, then $[U]\psi_2\in F(w)$.
Thus, by the induction hypothesis, if $w,U\Vdash\psi_1$, then $w,U\Vdash\psi_2$.
Therefore, $w,U\Vdash\psi_1\to\psi_2$ by item~3 of Definition~\ref{sat}.

\vspace{1mm}
Suppose that formula $\phi$ has the form $\B^T_X\psi$. 

\vspace{1mm}
\noindent$(\Rightarrow):$ Assume that $[U]\B^T_X\psi\notin F(w)$. Thus, by Lemma~\ref{child exists}, there is a world $w'\in W$ such that $w\sim_{U\cup X} w'$, $T\subseteq\mathcal{T}_{w'}$, and $[U]\psi\notin F(w')$. Then, $w',U\nVdash\psi$ by the induction hypothesis. Therefore, $w,U\nVdash\B^T_X\psi$ by item~4 of Definition~\ref{sat} and the assumptions $w\sim_{U\cup X} w'$ and $T\subseteq\mathcal{T}_{w'}$. 

\vspace{1mm}
\noindent$(\Leftarrow):$ Assume that $[U]\B^T_X\psi\in F(w)$. Consider any world $w'$ such that $w\sim_{U\cup X} w'$ and $T\subseteq\mathcal{T}_{w'}$. By item~4 of Definition~\ref{sat}, it suffices to show that $w',U\Vdash \psi$. 
Indeed, the assumption $[U]\B^T_X\psi\in F(w)$ implies $F(w)\vdash \B^T_{U\cup X}[U]\psi$ by the Commutativity axiom and propositional reasoning.
Thus, because set $F(w)$ is maximal, $\B^T_{U\cup X}[U]\psi\in F(w)$. 
Hence, $[U]\psi\in F(w')$ by Lemma~\ref{child all} and the assumptions $w\sim_{U\cup X} w'$ and $T\subseteq\mathcal{T}_{w'}$.
Thus, $w',U\Vdash \psi$ by the induction hypothesis.

\vspace{1mm}
Finally, suppose that formula $\phi$ has the form $[X]\psi$. 

\vspace{1mm}
\noindent
By item~5 of Definition~\ref{sat}, the statement $w,U\Vdash [X]\psi$ is equivalent to the statement $w,U\cup X\Vdash \psi$. By the induction hypothesis,  $w,U\cup X\Vdash \psi$ iff $[U\cup X]\psi\in F(w)$. Note that, because the set $F(w)$ is maximal, the statement $[U\cup X]\psi\in F(w)$ is equivalent to $F(w)\vdash [U\cup X]\psi$. By the Combination axiom and propositional reasoning, $F(w)\vdash [U\cup X]\psi$ iff $F(w)\vdash [U][X]\psi$. The statement $F(w)\vdash [U][X]\psi$ is equivalent to $[U][X]\psi\in F(w)$ also because the set $F(w)$ is maximal.
\end{proof}

\subsection{Completeness: Final Step}

\begin{theorem}[strong completeness]
For any set of formulae $F\subseteq \Phi$ and any formula $\phi\in\Phi$, if $F\nvdash \phi$, then there is a world $w$ of a trustworthiness model and a dataset $U\subseteq V$ such that $w,U\Vdash f$ for each formula $f\in F$ and $w,U\nVdash \phi$.
\end{theorem}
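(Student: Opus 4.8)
The plan is to run the standard Lindenbaum-plus-canonical-model argument, the only non-routine point being the choice of the root so that it qualifies as a world of the canonical model of Definition~\ref{canonical W}. First I would show that the set $F\cup\{\neg\phi\}$ is consistent. Since the relation $F\vdash\phi$ is defined using Modus Ponens only, the deduction lemma (already used in Lemma~\ref{super distributivity}) is available for it. Hence, if $F\cup\{\neg\phi\}$ were inconsistent it would derive $\phi$, so $F\vdash\neg\phi\to\phi$ by the deduction lemma, and then $F\vdash\phi$ via the tautology $(\neg\phi\to\phi)\to\phi$ and Modus Ponens, contradicting the hypothesis $F\nvdash\phi$.

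Next I would invoke Lindenbaum's lemma to extend $F\cup\{\neg\phi\}$ to a maximal consistent set $F_0$, and set $T_0=\varnothing$. The key observation is that with this choice the length-zero sequence $T_0,F_0$ is a genuine world of the canonical model $M(\varnothing,F_0)$: condition~2(a) of Definition~\ref{canonical W} is vacuous for the root, and condition~2(b) requires $\B^\varnothing_Y\chi\to\chi\in F_0$ for all $Y\subseteq V$ and $\chi\in\Phi$, which holds because each such formula is an instance of the Truth axiom, hence a theorem, and therefore belongs to the maximal consistent set $F_0$. Writing $w:=T_0,F_0$, we thus have $w\in W$ and $F(w)=F_0$.

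Finally I would read the conclusion off the truth lemma (Lemma~\ref{truth lemma}) evaluated at the dataset $U=\varnothing$. For any formula $\psi$ the truth lemma gives $w,\varnothing\Vdash\psi$ iff $[\varnothing]\psi\in F(w)$, and by the Empty Announcement axiom together with the maximality of $F(w)=F_0$ this is equivalent to $\psi\in F_0$. Since $F\subseteq F_0$, each $f\in F$ satisfies $w,\varnothing\Vdash f$; and since $\neg\phi\in F_0$ with $F_0$ consistent, we have $\phi\notin F_0$, so $w,\varnothing\nVdash\phi$. Taking this world $w$ and the dataset $U=\varnothing$ then furnishes exactly the witness required by the statement.

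The substantive work has already been carried out in Lemma~\ref{child exists} and Lemma~\ref{truth lemma} (whose proofs hinge on the Commutativity axiom and the $[U]$-prefixed formulation), so the completeness theorem itself is a short wrap-up. The one step that deserves care—and the place where this argument differs from a naive template—is verifying that the chosen root actually lies in $W$: this is precisely why I would take $T_0=\varnothing$, reducing the trust-closure requirement 2(b) to the already-available Truth axiom. Using $U=\varnothing$ in the truth lemma and then collapsing $[\varnothing]$ via the Empty Announcement axiom is the second point where the ternary satisfaction relation must be handled explicitly rather than treated as a binary one.
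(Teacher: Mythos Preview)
Your proposal is correct and follows essentially the same route as the paper: extend $F\cup\{\neg\phi\}$ to a maximal consistent $F_0$, take $T_0=\varnothing$ so that the root $(\varnothing,F_0)$ satisfies condition~2(b) of Definition~\ref{canonical W} via the Truth axiom, and then apply Lemma~\ref{truth lemma} at $U=\varnothing$ together with the Empty Announcement axiom. The only cosmetic difference is that the paper handles the $\neg\phi$ case by first deriving $[\varnothing]\neg\phi$ and then passing through the Duality axiom to get $\neg[\varnothing]\phi$, whereas you use the biconditional $[\varnothing]\psi\leftrightarrow\psi$ directly to conclude $\phi\notin F_0\Rightarrow[\varnothing]\phi\notin F_0$; both are valid.
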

\vspace{-3mm}
\begin{proof}
It follows from the assumption $F\nvdash \phi$ that the set $F\cup\{\neg\phi\}$ is consistent. Consider any maximal consistent extension $F_0$ of this set and the canonical model $M(\varnothing,F_0)$. 

Let us observe that the sequence $(\varnothing,F_0)$ is a world of this canonical model. By Definition~\ref{canonical W}, it is enough to prove that $\B^\varnothing_Y\psi\to\psi\in F_0$ for each dataset $Y\subseteq V$ and each formula $\psi\in\Phi$. Observe that the statement is true by the Truth axiom and the maximality of set $F_0$.

Next, note that $\neg\phi\in F_0$. Hence, $F_0\vdash [\varnothing]\neg\phi$ by the Empty Announcement axiom and propositional reasoning. Hence, $F_0\vdash \neg[\varnothing]\phi$ by the Duality axiom and propositional reasoning. Thus, $[\varnothing]\phi\notin F_0$ because set $F_0$ is consistent. Then, $(\varnothing,F_0),\varnothing\nVdash \phi$ by Lemma~\ref{truth lemma}.

Finally, consider any formula $f\in F$. To finish the proof of the theorem, it suffices to show that $(\varnothing,F_0),\varnothing\Vdash f$. Indeed, $F_0\vdash f$ because $f\in F\subseteq F_0$. Thus, $F_0\vdash [\varnothing]f$ by the Empty Announcement axiom and propositional reasoning. Therefore, $(\varnothing,F_0),\varnothing\Vdash f$ by Lemma~\ref{truth lemma}.
\end{proof}

\section{Model Checking}\label{Model Checking section}

In this section, we present and analyse a model checking algorithm for our logical system. The model checking algorithm is a Boolean function $sat(w_0,U_0,\phi_0)$ that, for any given  world $w_0$, dataset $U_0$, and formula $\phi_0$, returns the value true iff  $w_0,U_0\Vdash\phi_0$. We assume that the set $V$ of all data variables and the set $W$ of worlds are finite. 

Note that if the set $V$ is fixed, then a polynomial model checking algorithm can be easily constructed using the dynamic programming technique. Recall that this technique recursively computes the return value of a {\em function} using an {\em array}. In our case, the dynamic programming could be used to fill in a three-dimensional Boolean array $sat[]$ in such a way that the value of $sat[w,U,\phi]$ is equal to the value of $sat(w,U,\phi)$  for each world $w\in W$, each dataset $U\subseteq V$, and each subformula $\phi$ of formula $\phi_0$. Since the set $V$ is fixed, it has a fixed number of subsets $U\subseteq V$. Thus, the array $sat[]$ has a polynomial size. As a result, the execution time of the algorithm is also polynomial.

The situation is significantly different if the set $V$ is not fixed. In this case, the number of datasets $U\subseteq V$ is exponential. Thus, a straightforward application of the above approach would result in the array $sat[]$ having an exponential size, so filling such an array would also take exponential time. In the rest of this section, we modify the dynamic programming algorithm to work in polynomial time even when set $V$ is given as a part of the input.

To understand the idea behind our version of the algorithm, it will be convenient to think about computing $sat(w,U,\phi)$ as computing the value of formula $\phi$ in world $w$ under {\em environment} $U$. Informally, the recursive description of the function $sat(w,U,\phi)$ is already given in Definition~\ref{sat}. 
Note that when function $sat(w,U,\phi)$ makes a recursive call, the environment can change. For example, computing the value of $sat(w,U,[X][Y]p)$ requires the computation of the value of $sat(w,U\cup X,[Y]p)$.
But even before that, one needs to compute $sat(w,U\cup X\cup Y,p)$.

\begin{figure}[ht]
\vspace{-2mm}
\hrule\vspace{1mm}
\begin{algorithmic}
\footnotesize\sf
\Procedure{H}{$U, \phi$}
            \Switch{$\phi$}
            \Case{$\phi$ is an atomic proposition}
                \State \Return $[(U, \phi)]$
            \EndCase
            \Case{$\phi$ has the form $\neg\psi$} 
                \State \Return $\Call{H}{U, \psi}+[(U, \phi)]$
            \EndCase
            \Case{$\phi$ has the form $\psi\to\chi$} 
                \State \Return $\Call{H}{U, \psi}+\Call{H}{U, \chi}+[(U, \phi)]$
            \EndCase
            \Case{$\phi$ has the form $\B^T_X\psi$} 
                \State \Return $\Call{H}{U, \psi}+[(U, \phi)]$
            \EndCase
            \Case{$\phi$ has the form $[X]\psi$} 
                \State \Return $\Call{H}{U\cup X, \psi}+[(U,  \phi)]$
            \EndCase
            \EndSwitch
\EndProcedure
\end{algorithmic}
\vspace{1mm}\hrule
\caption{Helper function.}\label{helper function H}
\vspace{-7mm}
\end{figure}

The key to our efficient implementation of function $sat(w,U,\phi)$ is recursive function ${\sf H}(U,\phi)$ that computes all subformulae of formula $\phi$ and the environments in which these subformulae are evaluated by the recursive calls of function $sat(w,U,\phi)$. For example, function ${\sf H}(U,[X][Y]p)$ will return the list of pairs: $[(U\cup X\cup Y,p)$, $(U\cup X,[Y]p)$, $(U,[X][Y]p)]$. The pseudo-code for function ${\sf H}(U,\phi)$ is given in Figure~\ref{helper function H}. In this pseudo-code, the symbol $+$ denotes the append operation on lists. 
Our code guarantees that if the list produced by function ${\sf H}(U,\phi)$ contains a pair $(U',\psi)$, then all subformulae of $\psi$ with their  environments will appear on that list {\em before} pair $(U',\psi)$. This is important later for the calculation of $sat(w,U,\phi)$. 
Furthermore, note that the size of the list returned by function ${\sf H}(U_0,\phi_0)$ is linear in terms of the size of the input because it contains a single pair $(U',\phi')$ for each subformula $\phi'$ of the original formula~$\phi_0$. The execution time of function ${\sf H}(U_0,\phi_0)$ is also linear.

Once the list of all required pairs $(U_1,\psi_1),\dots,(U_n,\psi_n)$ is identified, we use dynamic programming to fill-in  {\em two-dimensional} array $sat[w,i]$ in such a way that $sat[w,i]$ stores the value of $sat(w,U_i,\phi_i)$. This part of the code of our algorithm is given in Figure~\ref{algorithm}. Note that to improve the readability of the code, we write $sat[w,(U_i,\phi_i)]$ instead of $sat[w,i]$.

\begin{figure}[ht]
\vspace{-2mm}
\hrule\vspace{1mm}
\begin{algorithmic}
\footnotesize\sf



\For{$(U, \phi) \in {\sf H}(U_0, \phi_0)$}
    \For{$w\in W$}
            \Switch{$\phi$}
            \Case{$\phi$ is an atomic proposition} 
                \If{$(w, U)\in \pi(\phi)$} 
                    \State $sat[w,(U, \phi)]\gets true$
                \Else
                    \State $sat[w,(U, \phi)]\gets false$
                \EndIf
            \EndCase
            \Case{$\phi$ has the form $\neg\psi$} 
                \State $sat[w, (U, \phi)]\gets \neg sat[w,(U, \psi)]$
            \EndCase
            \Case{$\phi$ has the form $\psi\to\chi$} 
                \State \hspace{-1mm}$sat[w,\!(U, \phi)]\!\gets\!\neg sat[w,\!(U, \psi)] \vee sat[w,(U, \chi)]$
            \EndCase
            \Case{$\phi$ has the form $\B^T_X\psi$} 
                \State $sat[w,(U, \phi)] \gets true$
                \For{$w'\in W$}
                   \If{$w\sim_{U\cup X} w'$ and $T\subseteq \mathcal{T}_{w'}$ 
                        \State and $\neg sat[w',(U, \psi)]$}
                        \State $sat[w,(U, \phi)]\gets false$
                     \State \textbf{break}
                   \EndIf
                \EndFor
            \EndCase
            \Case{$\phi$ has the form $[X]\psi$}
                \State $sat[w,(U,\phi)]\gets sat[w,(U\cup X,\psi)]$
            \EndCase
            \EndSwitch
    \EndFor
\EndFor
\end{algorithmic}
\vspace{1mm}\hrule
\caption{Model checking algorithm.}\label{algorithm}
\vspace{-10mm}
\end{figure}

\section{Conclusion}\label{Conclusion section}

In this paper, we study the interplay between the trust-based belief modality and the public announcement modality. Although neither of these modalities is new, there are non-trivial properties that connect these modalities, and these properties have not been studied before. Our main results are a sound and complete logical system that includes a newly proposed Commutativity axiom and a non-trivial model checking algorithm. To prove the completeness of the logical system combining the two modalities, we significantly modify the semantics and existing proofs of completeness. Model checking for either modality has not been studied before.


\bibliographystyle{splncs04}
\bibliography{naumov}

\end{document}